\newtheorem{thrm}{Theorem}
\newtheorem{observation}[thrm]{Observation}
\newtheorem{lemma}[thrm]{Lemma}
\newtheorem{fact}[thrm]{Fact}
\newtheorem{definition}[thrm]{Definition}
\DeclareFontShape{OT1}{cmr}{bx}{sc}{<-> cmbcsc10}{}
\newcommand{\ones}{\mathds{1}}
\DeclareMathOperator{\sh}{Sh}
\newcommand{\shap}{\phi^{\sh}}
\newcommand{\Sshap}{S^*}
\newcommand{\nb}[2]{\#_{#1}(#2)}
\newcommand{\MGSS}{\textsc{Max-Shapley-Group}\xspace }
\newcommand{\HMHS}{\textsc{Harmonic-Max-Hitting-Set}\xspace }
\newcommand{\MHS}{\textsc{Max-Hitting-Set}\xspace }
\newcommand{\RR}{\ensuremath{\mathbb{R}}}
\newcommand{\RRR}{\mathcal{R}}
\newcommand{\PPP}{\mathcal{P}}
\DeclareMathOperator{\E}{E}
\newcommand{\GSC}{\hat \phi^{\sh}}
\newcommand{\tGSC}{\tilde \phi^{\sh}}
\newcommand{\DKS}{\textsc{Densest-\(k\)-Subgraph}\xspace }
\begin{document}

\title{Maximizing Influence-based Group Shapley Centrality}

\author{Ruben Becker \and Gianlorenzo D'Angelo \and Hugo Gilbert}
\date{Gran Sasso Science Institute, L'Aquila, Italy\\firstname.lastname@gssi.it} 

\maketitle

\begin{abstract}
    One key problem in network analysis is the so-called influence maximization problem, which consists in finding a set $S$ of at most $k$ seed users, in a social network, maximizing the spread of information from $S$. This paper studies a related but slightly different problem: We want to find a set $S$ of at most $k$ seed users that maximizes the spread of information, when $S$ is added to an already pre-existing -- \emph{but unknown} -- set of seed users $T$. We consider such scenario to be very realistic. Assume a central entity wants to spread a piece of news, while having a budget to influence $k$ users. This central authority may know that some users are already aware of the information and are going to spread it anyhow. The identity of these users being however completely unknown.
    
    We model this optimization problem using the Group Shapley value, a well-founded concept from cooperative game theory. While the standard influence maximization problem is easy to approximate within a factor $1-1/e-\epsilon$ for any $\epsilon>0$, assuming common computational complexity conjectures, we obtain strong hardness of approximation results for the problem at hand in this paper. Maybe most prominently, we show that it cannot be approximated within $1/n^{o(1)}$ under the Gap Exponential Time Hypothesis. Hence, it is unlikely to achieve anything better than a polynomial factor approximation. Nevertheless, we show that a greedy algorithm can achieve a factor of $\frac{1-1/e}{k}-\epsilon$ for any $\epsilon>0$, showing that not all is lost in settings where $k$ is bounded. 
\end{abstract}

\section{INTRODUCTION}
Node centrality and propagation of information or influence are two main topics in network analysis. The former regards the problem of determining the most important nodes in a network according to some measure of importance, while the latter studies mathematical models to represent how information propagates in a communication network or how the influence of individuals spreads in a network.

In order to measure the centrality of nodes in a network a real-valued function, called centrality index, associates a real number with each node that reflects its importance or criticality within the network. 
Most of the centrality indices defined in the literature are based on graph-theoretical concepts and static graph properties like distance (closeness, harmonic, and degree), spectral (page-rank or Katz), or path-based (betweenness, coverage) properties. 
Modeling the spread of influence, instead, requires the combination of a \emph{dynamic} model for influence diffusion and a static model based on the network topology.

Chen and Teng~\cite{chen2017interplay} initiated the study of the interplay between spreading dynamics and network centrality by defining two centrality indices based on dynamic models for influence diffusion: the \emph{single node influence centrality}, which measures the centrality of a node by its capability of spreading influence when acting alone, and the \emph{Shapley centrality}, which uses the Shapley value to measure the capability of a node to increase the spreading capacity of a group of nodes.

In cooperative game theory, the Shapley value assesses the expected relevance of each player within a subset of players (also called coalition), where the expectation is taken over all possible coalitions. More formally, given a characteristic function $\tau$ that maps each coalition to the total payoff that this coalition receives, the Shapley value of a player $i$ can be understood as the expected payoff that $i$ adds to any coalition, w.r.t. function $\tau$. 
The Shapley centrality index studied by Chen and Teng~\cite{chen2017interplay} measures the centrality of a node by using the Shapley value and the spreading function $\sigma$ as characteristic function.

Most centrality indices neglect the relevance that coalitions of individuals and their coordination play in social networks. For this reason, many centrality indices have been generalized to \emph{group centrality indices} which are real-valued functions over subsets of nodes instead of single nodes. Typically, a group centrality index is fundamentally different from a combination of the individual centrality indices of the nodes that compose the group, as it captures the relevance of the set as a whole, and not just as a sum of individuals.

This paper extends the notion of influence-based Shapley centrality from single nodes to \emph{groups of nodes} by using the concept of the Group Shapley value.  
Our Influence-based Group Shapley (IGS) centrality associates to a set $S$ of nodes, the expected gain in influence that $S$ adds to any pre-existing seed set $T$.  
Notably, we investigate the problem of finding a set \(S\) of size at most $k$ with highest IGS value.  

Interestingly, we believe that this way of evaluating the importance of a set of seed users is of high interest from a practical viewpoint. Assume a central entity wants to spread a given piece of news, while having a budget to influence a set of $k$ users, at the same time knowing that already some users are aware of the information and are going to spread it anyhow. The central entity, however, may have no knowledge about who these users are. In this case, the central authority should target a set of seed users with large IGS value.

\paragraph{Our contribution.} We formalize the \MGSS problem  of finding a set of seed nodes with highest IGS centrality under a cardinality constraint and show how to compute a $(1-\epsilon)$-approximate value for the IGS centrality of a given set of nodes. Unfortunately, assuming common complexity theory conjectures, we obtain strong hardness of approximation results for the \MGSS problem. Maybe most prominently, we show that it cannot be approximated within $1/n^{o(1)}$ under the Gap Exponential Time Hypothesis. Hence, it is unlikely to achieve an approximation factor that is better than a polynomial in $n$. Nevertheless, we show that a greedy algorithm achieves a factor of $\frac{1-1/e}{k}-\epsilon$ for any $\epsilon>0$, showing that not all is lost in settings where $k$ is bounded.

\section{RELATED WORK}
There is a large literature about network centrality indices, see~\cite[Ch.~7]{N10} for an introduction. Centrality indices are usually categorized as distance-based (e.g., closeness centrality~\cite{B50}), path-based (e.g., betweenness centrality~\cite{F77}), or spectral (e.g., page-rank~\cite{BP98}). Most of the literature focuses on defining indices for specific application domains (e.g.,~\cite{B50,BP98,F77}), on the efficient computation of the centrality index of each node or of the top-ranked nodes (e.g.,~\cite{BergaminiBCMM19,RiondatoU18}), or on axiomatic characterization (e.g.,~\cite{AltmanT05,BV14}). 
Several centrality indices have been generalized to \emph{group} centrality indices~\cite{AngrimanGBZGM19,BergaminiGM18,ChenWW16,EveretB99,IshakianETB12,MedyaSSBS18,ZhaoLTGX14,ZhaoWLTG17} and to \emph{Shapley} centrality~\cite{GomezGMOPT,MichalakASRJ13,SkibskiMR18,SzczepanskiMR16,TarkowskiMRW18Arxiv,TarkowskiSMHW18}.
In all these papers, the centrality indices are solely based on graph theoretical properties and do not take into account dynamic models for influence spread.

Modeling spread of influence and information diffusion has also been widely investigated in the literature. One of the most studied problems is the so-called \emph{influence maximization problem}: Given a network and a budget $k$, find a set of $k$ nodes, called \emph{seeds}, to be the starters of an influence diffusion process in such a way that the expected number of nodes that have been influenced at the end of the process is maximized. 
The influence maximization problem has been introduced by Domingos and Richardson~\cite{DomingosR01,RichardsonD02} and formalized as an optimization problem by Kempe et al.~\cite{DBLP:journals/toc/KempeKT15}. Several work followed these seminal papers, we refer the interest reader to~\cite{ChenCL13} and to the references in~\cite{DBLP:journals/toc/KempeKT15}. 
In the literature on influence maximization, the influence capability of a set of seeds is modeled as a function $\sigma$: given a set $S$ of seed nodes, $\sigma(S)$ is the expected number of eventually influenced nodes. The definition of $\sigma$ depends on the model used to represent the spread of influence in the network, the two most popular models being the \emph{Independent Cascade Model (ICM)} and \emph{Linear Threshold Model (LTM)}, see Section~\ref{subsec: Preliminaries on GSV for SI} for more details.

As previously mentioned, Chen and Teng initiated studying the interplay of network centrality and dynamic models for influence spread~\cite{chen2017interplay}. They introduced two centrality indices that are based on the most commonly used models for influence diffusion. 
The first index is called Single Node Influence (SNI) and measures the importance of a node $v$ by its capability of influencing other nodes when $v$ is the only seed node, i.e., the SNI of $v$ is $\sigma(\{v\})$. 
The second index, called Shapley centrality, is computed as the Shapley value of each node when the payoff function is $\sigma$. The authors presented an axiomatic characterization of the proposed centrality indices, that is they presented five axioms and showed that the Shapley centrality is the only index that satisfies all of them, while the SNI centrality is the only index that satisfies a set of three different axioms. This characterization captures the differences between these two indices: while SNI is suitable to model the centrality of a single node when it acts alone, the Shapley centrality characterizes the additional influence of a single node when acting in a group. 
We remark that our IGS centrality extends this latter concept by measuring the added value (in terms of influence) of a group of nodes when it operates within a larger group.
In the same paper, Chen and Teng proposed an efficient algorithm to approximately compute SNI and Shapley centralities and experimentally evaluated it on several real-world networks.

In a follow-up paper, Chen et al~\cite{ChenTZ18Arxiv}, presented a unified framework to extend classical graph-theoretical centrality indices to influence based ones and group centrality indices to their Shapley influence-based counterparts.
They follow an axiomatic approach, that is, they show that the derived influence-based centrality formulations are the unique centrality indices that conform with their corresponding graph-theoretical ones and satisfy the Bayesian axiom. 
They also provide scalable algorithms to compute influence-based centrality and Shapley centrality. 
We remark that also in this latter paper the aim is to evaluate the centrality of a single node, while the focus of our paper is on evaluating the centrality of a group of nodes.

\section{PRELIMINARIES}
In this section, we aim to connect cooperative game theory with influence maximization. Subsection~\ref{subsec: Preliminaries on GSV} introduces the concept of the Group Shapley value as (arguably most popular) special case of probabilistic generalized values in cooperative game theory. 
We then provide some observations on the Group Shapley value that will turn out essential when we turn to its computation later on. Subsection~\ref{subsec: Preliminaries on GSV for SI} recalls the basics needed from the influence maximization literature. Subsection~\ref{subsec: Preliminaries on GSC} explains how the Group Shapley value can be applied to the setting of influence maximization, we refer to it as Influence-based Group Shapley (IGS) centrality in this case.

Throughout the paper, we denote by $[n]$ the set $\{1,\ldots,n\}$, by $2^{S}$ the set of all subsets of $S$, and by \(\binom{S}{k}\) the set of subsets of $S$ of size $k$. 

\subsection{The Group Shapley Value} \label{subsec: Preliminaries on GSV}
In cooperative game theory~\cite{chalkiadakis2011computational, myerson2013game}, a game on $n\ge 2$ players 
is commonly formalized by a characteristic function $\tau:2^{[n]}\rightarrow \RR$ that assigns to every subset $S\subseteq [n]$ of players, also called a coalition, a value $\tau(S)$.
Marichal et al.~\cite{DBLP:journals/dam/MarichalKF07} introduced the concepts of \emph{probabilistic generalized values} and \emph{generalized semivalues} as ways of measuring the worth of a coalition $S\subseteq [n]$. Their notions generalize the more classical concepts of probabilistic values and semivalues from individuals to groups of individuals. That is, these values quantify the prospect of groups of players in a game. For fixed $n$, a probabilistic generalized value of a coalition $S\subseteq [n]$ in a game $\tau:2^{[n]}\rightarrow \RR$ is of the form
$
  \phi_\tau(S) := \sum_{T\subseteq [n]\setminus S} p^S_T \cdot (\tau(T \cup S) - \tau(T)),
$
where $p^S$ denotes a probability distribution on the subsets $T$ of $[n]\setminus S$.
That is, generally speaking, a probabilistic generalized value quantifies the average marginal contribution of the set $S$ to any set $T$ of players that is disjoint from $S$. Note that these marginal contributions are assigned different probabilities $p^S_T$. 
For fixed $S$, these probabilities can be understood as a-priori likelihoods of sets $T$ to be extended by $S$.
A generalized semivalue is a probabilistic generalized value such that $p^S_T=p^{S'}_{T'}$ if $|S|=|S'|$ and $|T|=|T'|$.

The arguably best known instance of generalized semivalues is the \emph{Group Shapley value}~\cite{DBLP:journals/4or/FloresMT19,DBLP:journals/dam/MarichalKF07}.
For a subset $S\subseteq [n]$ of players in a game $\tau$, the \emph{Group Shapley value} of $S$ is defined as
\[
  \shap_\tau(S):=\sum_{T\subseteq [n]\setminus S} \frac{|T|!(n-|S|-|T|)!}{(n-|S|+1)!} \cdot (\tau(T \cup S) - \tau(T)),
\]
i.e., it is the generalized semivalue for which 
$p^S_T=\frac{1}{n-|S|+1}/ \binom{n-|S|}{|T|}$. Stated otherwise, the set $T$ can be seen as a random variable chosen by first sampling an integer \(t\in\{0,\ldots,n-|S|\}\) uniformly at random and then picking a set of size $t$ in \([n]\setminus S\) uniformly at random.  

The Group Shapley value is a generalization of the well-known Shapley value~\cite{shapley1953value}, which quantifies the contribution of a single player to a coalition in a game. The standard \emph{Shapley value} of a player $i$ w.r.t.\ $\tau$ can simply be defined as the Group Shapley value of the singleton set $\{i\}$, that is  $\shap_\tau(i):=\shap_\tau(\{i\})$. It is well known that, using a standard counting argument, this definition is equivalent to
\(
  \shap_\tau(i):=\E_{\pi}[\tau(T_{\pi, i} \cup \{i\}) - \tau(T_{\pi, i})],
\)
where $\pi\sim \Pi([n])$ is a permutation of $[n]$ that is picked uniformly at random among all permutations $\Pi([n])$ of $[n]$ and $T_{\pi, i}$ denotes the set of players in $[n]$ ordered before $i$ in $\pi$.
A similar formulation in terms of permutations can also be obtained for the \emph{Group} Shapley value. Let us first introduce the following notation: For a set \(X\subseteq [n]\), let \(X_{\overline{S}} := (X\setminus S) \cup \{ \hat{s}\}\), where \(\hat{s}\) is an auxiliary item representing all the items from \(S\).\footnote{Flores et al.~\cite{DBLP:journals/4or/FloresMT19} define the so-called \emph{merging game} on $V_{\overline{S}}$. Considering games that result by merging players is quite common, see for example the work of Lehrer~\cite{lehrer1988axiomatization}.}
\begin{observation}[Group Shapley formulation using permutations] \label{obs: group shapley with permutation}
    The Group Shapley value of a set \(S\) in a game \(\tau\) is equal to
    \[
      \shap_\tau(S) = \E_{\pi \sim \Pi([n]_{\overline{S}})}[\tau(T_{\pi,\hat{s}}\cup S) - \tau(T_{\pi,\hat{s}})],
    \] 
    where \(T_{\pi,\hat{s}}\) is the subset of $[n]\setminus S$ preceding \(\hat{s}\) in the permutation \(\pi\) of $[n]_{\overline{S}}$ picked uniformly at random from $\Pi([n]_{\overline{S}})$.
\end{observation}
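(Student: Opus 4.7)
The plan is a standard counting argument in the spirit of the classical equivalence between the summation and permutation formulas for the Shapley value, but applied to $[n]_{\overline{S}}$ instead of $[n]$.

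First I would unfold the right-hand side by grouping permutations according to the set $T_{\pi,\hat{s}}$ of elements preceding $\hat{s}$. Since the permutation $\pi$ is chosen uniformly at random from $\Pi([n]_{\overline{S}})$, and $|[n]_{\overline{S}}| = n - |S| + 1$, the total number of permutations is $(n-|S|+1)!$. Fixing a set $T \subseteq [n]\setminus S$ with $|T|=t$, I would count the permutations $\pi$ of $[n]_{\overline{S}}$ such that $T_{\pi,\hat s} = T$: the elements of $T$ must occupy the first $t$ positions (in any of $t!$ orders), $\hat{s}$ must sit in position $t+1$, and the remaining $n-|S|-t$ elements of $([n]\setminus S)\setminus T$ fill the remaining positions (in any of $(n-|S|-t)!$ orders). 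This yields exactly $t!\,(n-|S|-t)!$ permutations.

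Consequently,
\[
  \Pr_{\pi}\bigl[T_{\pi,\hat{s}} = T\bigr] = \frac{|T|!\,(n-|S|-|T|)!}{(n-|S|+1)!}.
\]
Plugging this into the definition of expectation gives
\[
  \E_{\pi}\bigl[\tau(T_{\pi,\hat{s}}\cup S) - \tau(T_{\pi,\hat{s}})\bigr] = \sum_{T\subseteq [n]\setminus S} \frac{|T|!\,(n-|S|-|T|)!}{(n-|S|+1)!} \cdot \bigl(\tau(T\cup S) - \tau(T)\bigr),
\]
which coincides with the defining formula of $\shap_\tau(S)$.

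There is essentially no obstacle here; the only mildly subtle point is to correctly identify $[n]_{\overline{S}}$ as a ground set of size $n-|S|+1$ (with $\hat{s}$ replacing all elements of $S$), so that the normalization $(n-|S|+1)!$ matches the denominator appearing in the weights of the Group Shapley value. Once that is noted, the argument reduces to a one-line probabilistic counting identity identical to the one used for the classical Shapley value.
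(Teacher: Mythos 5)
Your proof is correct. It takes a mildly different route from the paper's: you establish the identity directly, by computing $\Pr_{\pi}[T_{\pi,\hat{s}}=T]=\frac{|T|!\,(n-|S|-|T|)!}{(n-|S|+1)!}$ via an explicit count of permutations of the $(n-|S|+1)$-element ground set $[n]_{\overline{S}}$ and matching these probabilities against the weights in the summation definition of $\shap_\tau(S)$. The paper instead introduces the merged game $\hat{\tau}$ on $[n]_{\overline{S}}$ (with $\hat{\tau}(T)=\tau(T)$ if $\hat{s}\notin T$ and $\hat{\tau}(T)=\tau((T\setminus\{\hat{s}\})\cup S)$ otherwise), observes that the Group Shapley value of $S$ in $\tau$ equals the ordinary Shapley value of $\hat{s}$ in $\hat{\tau}$, and then invokes the classical permutation formulation of the single-player Shapley value. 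The two arguments rest on the same combinatorial identity --- the paper's cited ``standard counting argument'' for the classical Shapley value is precisely the count you carry out --- so your version is more self-contained and elementary, while the paper's reduction is shorter on the page and makes the conceptual link to the merging-game literature explicit. Both are complete; your identification of $|[n]_{\overline{S}}|=n-|S|+1$ as the source of the normalization is exactly the right observation.
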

\begin{proof}
    Consider the game \(\hat{\tau}\) defined on \([n]_{\overline{S}}\) by \(\hat{\tau}(T) = \tau(T)\) if $\hat{s} \not\in T$ and \(\hat{\tau}(T) = \tau((T\setminus\{\hat{s}\}) \cup S )\) otherwise. Then Observation~\ref{obs: group shapley with permutation} follows from the fact that the Shapley value of $\hat{s}$ in $\hat{\tau}$ coincides with the Group Shapley value of $S$ in \(\tau\).    
\end{proof}

We proceed with the following observation on the probability that a given set $R$ intersects with a set $T$ chosen according to probabilities $p^S_T$ for the Group Shapley value. This observation relies on the Group Shapley formulation using the permutations.
\begin{observation}[Intersection Probability for the Group Shapley Value]\label{obs: intersection probability shapley}
  Let $S\subseteq [n]$ be a group. For any \(R\subseteq [n]\) with \(R\cap S \neq \emptyset\), it holds that \(\Pr_{\pi\sim \Pi([n]_{\overline{S}})}[R_{\overline{S}} \cap T_{\pi,\hat{s}} = \emptyset] = 1/(|R\setminus S| +1)\), where \(T_{\pi,\hat{s}}\) is the subset of $[n]$ preceding \(\hat{s}\) in the random permutation \(\pi\) of $[n]_{\overline{S}}$.
\end{observation}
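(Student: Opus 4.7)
The plan is to unpack the definitions carefully and reduce the probabilistic event to a simple order-statistic question on a uniformly random permutation. The whole observation is essentially a symmetry argument; the only work is translating the event into the right form.

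First, I would use the hypothesis $R \cap S \neq \emptyset$ to note that $R_{\overline{S}} = (R\setminus S) \cup \{\hat{s}\}$ actually contains the merged item $\hat{s}$. Next, since $T_{\pi,\hat{s}}$ is by definition the set of elements \emph{strictly} preceding $\hat{s}$ in the permutation $\pi$, we automatically have $\hat{s} \notin T_{\pi,\hat{s}}$. Consequently the only way for $R_{\overline{S}}$ to fail to be disjoint from $T_{\pi,\hat{s}}$ is through the elements of $R\setminus S$, so the event $R_{\overline{S}} \cap T_{\pi,\hat{s}} = \emptyset$ is equivalent to the event that $\hat{s}$ appears in $\pi$ before every element of $R\setminus S$.

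To conclude, I would invoke the standard symmetry property of uniformly random permutations: restricting $\pi\sim \Pi([n]_{\overline{S}})$ to any fixed subset of $[n]_{\overline{S}}$ induces a uniform order on that subset. Applying this to the set $A := (R\setminus S)\cup\{\hat{s}\}$, which has cardinality $|R\setminus S| + 1$, each of its elements is equally likely to come first, so the probability that $\hat{s}$ does is exactly $1/(|R\setminus S|+1)$, which is what the observation claims.

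There is no real obstacle. The only care needed is in (i) interpreting the hypothesis $R\cap S\neq\emptyset$ correctly, as it is precisely what guarantees that $\hat{s}$ is a genuine element of $R_{\overline{S}}$ and hence makes the relevant set $A$ have size $|R\setminus S|+1$ rather than $|R\setminus S|$, and (ii) respecting the convention that $\hat{s}$ itself is excluded from $T_{\pi,\hat{s}}$, which is what turns the disjointness condition into the clean "$\hat{s}$ comes first in $A$" statement.
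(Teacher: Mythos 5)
Your proof is correct and follows essentially the same route as the paper's: both reduce the disjointness event to ``$\hat{s}$ precedes every other element of $R_{\overline{S}}$'' and conclude by the uniformity of the induced order on $R_{\overline{S}}$, a set of size $|R\setminus S|+1$. Your write-up just spells out the role of the hypothesis $R\cap S\neq\emptyset$ and the exclusion of $\hat{s}$ from $T_{\pi,\hat{s}}$ more explicitly than the paper does.
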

\begin{proof}
  The event \(R_{\overline{S}}\cap T_{\pi,\hat{s}}=\emptyset\) is equivalent to the permutation \(\pi\) of $[n]_{\overline{S}}$ placing \(\hat{s}\) ahead of all the other nodes in \(R_{\overline{S}}\). Since \(\pi\) is sampled uniformly at random from \(\Pi([n]_{\overline{S}})\), this event happens with probability exactly \(1/|R_{\overline{S}}|=1/(|R\setminus S| +1)\).
\end{proof}

\subsection{Influence Maximization} \label{subsec: Preliminaries on GSV for SI}
We will be interested in Generalized Semivalues for functions that describe influence on information propagation in social networks. Two of the most popular models for describing information propagation in networks are the \emph{Independent Cascade} and \emph{Linear Threshold} models~\cite{DBLP:journals/toc/KempeKT15}. In both of these models, we are given a directed graph \(G=(V, E)\) where $V$ is a set of $n$ nodes, values \(\{p_{uv} \in [0,1]: (u,v) \in E\}\) and an initial node set \(A\subseteq V\) called \emph{seed nodes}. A spread of influence from the set \(A\) is then defined as a randomly generated sequence of node sets $(A_t)_{t\in \mathbb{N}}$, where \(A_0=A\) and \(A_{t-1}\subseteq A_{t}\). These sets represent active users, i.e., we say that a node \(v\) is \emph{active} at time step \(t\) if \(v\in A_t\). This sequence converges as soon as \(A_{t^*}=A_{t^*+1}\), for some time step \(t^*\ge 0\) called the time of quiescence. For a set \(A\), we use \(\sigma(A) = \E[|A_{t^*}|]\) to denote the expected number of nodes activated at the time of quiescence when running the process with seed nodes \(A\). In influence maximization, a common objective is to find a set $A$  maximizing \(\sigma(A)\) under a cardinality constraint.

\paragraph{The Independent Cascade Model.}
In the \emph{Independent Cascade} (IC) model, the values \(\{p_{uv} \in [0,1]: (u,v) \in E\}\) are probabilities. The sequence of node sets $(A_t)_{t\in \mathbb{N}}$, is randomly generated as follows.  If \(u\) is active at time step \(t\ge 0\) but was not active at time step \(t-1\), i.e., \(u\in A_t\setminus A_{t-1}\) (with \(A_{-1}=\emptyset\)), it tries to activate each of its neighbors $v$, independently, and succeeds with probability \(p_{uv}\). In case of success, \(w\) becomes active at time step \(t+1\), i.e., \(v\in A_{t+1}\). 

\paragraph{The Linear Threshold Model.} In the \emph{Linear Threshold} (LT) model, the values \(\{p_{uv} \in [0,1]: (u,v) \in E\}\) are weights such that for each node $v$, $\sum_{(u,v)\in E} p_{uv} \le 1$. The sequence of node sets $(A_t)_{t\in \mathbb{N}}$, is randomly generated as follows. At time step \(t+1\), every inactive node $v$ such that $\sum_{(u,v)\in E, u\in A_t} p_{uv} \ge \theta_v$ becomes active, i.e., \(v\in A_{t+1}\), where thresholds $\theta_v$ are chosen independently and uniformly at random from the interval $[0, 1]$. 

\paragraph{The Triggering Model.} The IC and LT models can be generalized to what is known as the \emph{Triggering Model}, see~\cite[Proofs of Theorem 4.5 and 4.6]{DBLP:journals/toc/KempeKT15}. 
For a node \(v\in V\), let \(N_v\) denote all in-neighbors of \(v\). In the Triggering model, every node independently picks a \emph{triggering set} \(T_v \subseteq N_v\) according to some distribution over subsets of its in-neighbors.
For a possible outcome \(X = (T_v)_{v\in V}\) of triggering sets for the nodes in \(V\), let \(G_X = (V,E')\) denote the sub-graph of \(G\) where \(E' = \{(u,v)|v \in V, u\in T_v \}\). Moreover, let \(\rho_X(A)\) be the set of nodes reachable from \(A\) in \(G_X\), then \(\sigma(A)=\E_X[|\rho_X(A)|]\). The IC model is obtained from the Triggering model if for each directed edge $(u,v)$, $u$ is added to $T_v$ with probability $p_{uv}$. Differently, the LT model is obtained if each node $v$ picks at most one of its in-neighbor to be in her triggering set, selecting a node $u$ with probability $p_{uv}$ and selecting no one with probability $1 - \sum_{u \in N_v} p_{uv}$.

Interestingly, the Triggering Model allows for using a concept commonly referred to as reverse reachable sets.

\paragraph{Reverse Reachable (RR) Sets.}
We describe the process of generating so-called \emph{Reverse Reachable (RR) sets}~\cite{DBLP:conf/soda/BorgsBCL14,DBLP:conf/sigmod/TangXS14}. A random RR set $R$ is generated as follows~\cite{chen2017interplay}.
(1) Set $R=\emptyset$.
(2) Uniformly at random select a root node $v\in V$ and add it to $R$.
(3) Until every node in $R$ has a triggering set: Pick a node $u$ from $R$ that does not have a triggering set, sample its triggering set $T_u$ and add it to $R$. A random RR set $R$ can be equivalently generated as all nodes that can reach a uniformly at random sampled root node $v$ in a random graph $G_X$ sampled as in the Triggering Model~\cite{DBLP:conf/soda/BorgsBCL14}. We get the following lemma for the marginal contribution of a set, the lemma generalizes Lemma 22 in~\cite{chen2017interplay} from marginal contribution of nodes to sets of nodes.
\begin{lemma}[Marginal Contribution]\label{lem: marginal contribution}
  Let \(R\) be a random RR set. For any \(T \subseteq V\) and \(S\subseteq V\setminus T\):
  \begin{align*}
      \sigma(T) &= n\cdot \Pr_R[R \cap T \neq \emptyset],\\
      \text{ and }\quad
      \sigma(T\cup S)-\sigma(T) &= n\cdot \Pr_R[R\cap S \neq \emptyset \land R\cap T=\emptyset].
  \end{align*}
\end{lemma}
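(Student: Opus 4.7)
The plan is to pass through the equivalent characterization of a random RR set described just above the lemma: a random RR set $R$ is distributed identically to the set of nodes that can reach a uniformly at random chosen root node $v\in V$ in a random graph $G_X$ sampled according to the Triggering Model. So I would first fix this viewpoint and observe the simple but central fact that, for any $U\subseteq V$, the event $R\cap U\neq\emptyset$ is the event that some node in $U$ can reach $v$ in $G_X$, equivalently $v\in \rho_X(U)$.

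For the first equality, I would then write
\[
  \Pr_R[R\cap T\neq\emptyset]
  = \E_X\bigl[\Pr_v[v\in \rho_X(T)]\bigr]
  = \E_X\!\left[\frac{|\rho_X(T)|}{n}\right]
  = \frac{\sigma(T)}{n},
\]
using independence of $X$ and $v$ and the fact that $\sigma(T)=\E_X[|\rho_X(T)|]$ by the Triggering Model characterization recalled in the paragraph on that model.

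For the second equality, the key structural remark is that since $\rho_X(T)\subseteq \rho_X(T\cup S)$, we have $|\rho_X(T\cup S)|-|\rho_X(T)| = |\rho_X(T\cup S)\setminus \rho_X(T)|$; and a node $v$ lies in $\rho_X(T\cup S)\setminus \rho_X(T)$ if and only if it is reachable from some node of $S$ but from no node of $T$, i.e.\ $v\in \rho_X(S)$ and $v\notin \rho_X(T)$. Combining with the previous viewpoint, this event, when $v$ is drawn uniformly at random, is exactly the event that the associated RR set satisfies $R\cap S\neq\emptyset$ and $R\cap T=\emptyset$. Taking expectations over $X$ and $v$ gives
\[
  \sigma(T\cup S)-\sigma(T)
  = \E_X\bigl[|\rho_X(S)\setminus \rho_X(T)|\bigr]
  = n\cdot \Pr_R[R\cap S\neq\emptyset \,\land\, R\cap T=\emptyset],
\]
which is the claim.

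I do not anticipate a real obstacle here: once one commits to the $G_X$-viewpoint of RR sets, both equalities reduce to swapping the order of expectation over $X$ and the uniform draw of the root. The only care needed is the disjointness hypothesis $S\cap T=\emptyset$, which is what justifies identifying $\rho_X(T\cup S)\setminus \rho_X(T)$ with the nodes reachable from $S$ but not from $T$ cleanly (without overcounting). This is the same argument structure as Lemma~22 in~\cite{chen2017interplay}, just applied to a set $S$ rather than a single node.
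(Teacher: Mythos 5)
Your proof is correct and follows essentially the same route as the paper's: both pass to the Triggering Model viewpoint, write $\sigma$ as an expectation of $|\rho_X(\cdot)|$ over the outcome profile $X$, swap in a uniformly random root $v$, and translate the reachability events into the RR-set events $R\cap T\neq\emptyset$ and $R\cap S\neq\emptyset\land R\cap T=\emptyset$. The only cosmetic difference is that you phrase the second identity via $\rho_X(S)\setminus\rho_X(T)$ rather than $\rho_X(T\cup S)\setminus\rho_X(T)$, which are the same set since $\rho_X(T\cup S)=\rho_X(T)\cup\rho_X(S)$ (so the disjointness of $S$ and $T$ is not actually what that step hinges on).
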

\begin{proof}
  Let \(X\) be a random outcome profile in the triggering model and let \(\rho_X(S)\) denote the set of nodes reachable from \(S\) in $G_X$. Then
  \begin{align*}
      \sigma(T) &= \E_X[|\rho_X(T)|]
                = \E_X[\sum_{v\in V} \ones_{v\in \rho_X(T)}]\\
                &= n\cdot \E_X [\E_{v\sim V} [\ones_{v \in \rho_X(T)}]]
                = n\cdot \Pr_{X,v\sim V} [v \in\rho_X(T)].
  \end{align*}
  Recall that a random RR set is equivalently generated as all nodes that can reach a uniformly at random sampled root node $v$ in a random outcome graph $G_X$. Hence, the above event is equivalent to \(R\cap T \neq \emptyset\) and the first claim follows.
  Similarly, we have
  \begin{align*}
    \sigma(T\cup S) - \sigma(T) &= \E_X[|\rho_X(T\cup S) \setminus \rho_X(T)|]\\
    &= n \cdot \E_X[\E_{u\sim V}[ \ones_{u\in \rho_X(T\cup S) \setminus \rho_X(T)}]]\\
    &= n\cdot \Pr_{X,u\sim V} [u \in \rho_X(T\cup S) \setminus \rho_X(T)].
  \end{align*}
  By a similar argument, the event \(u \in \rho_X(T\cup S) \setminus \rho_X(S)\) is equivalent to the event \(R \cap S \neq \emptyset \land R\cap T = \emptyset\). This shows the second claim.
\end{proof}

\subsection{The Group Shapley Centrality}\label{subsec: Preliminaries on GSC}
Chen and Teng~\cite{chen2017interplay} consider the Shapley value of nodes w.r.t. the influence spread function $\sigma$ in a social network modeled by the Triggering Model. They use the resulting \emph{Shapley centrality} $\shap_\sigma(i)$ for $i$ being a node in the network as a measure of centrality of node $i$. They furthermore show that this centrality measure satisfies and is uniquely characterized by certain axioms, similar to the axioms characterizing the standard Shapley value. In this work, we consider the Group Shapley value w.r.t. $\sigma$.
For a set $S$ of nodes, we call the Group Shapley value w.r.t.\ $\sigma$ the \emph{Influence-based Group Shapley} (IGS) centrality of $S$, referred to as $\shap(S)$ omitting $\sigma$ as an index:
\begin{equation}
\shap(S)= \E_{\pi \sim \Pi(V_{\overline{S}})}[\sigma(T_{\pi,s}\cup S) - \sigma(T_{\pi,s})]. \label{eq:Shapley Centrality}
\end{equation}

One of the main contributions of Chen and Teng~\cite{chen2017interplay} is an algorithm that approximates the Shapley centrality of every node. The key lemma in their analysis is that for a node $v\in V$, it holds that $\shap_\sigma(v)=n\E_R[\ones_{v\in R}/|R|]$, where the expected value is over random RR sets generated as described above. For the IGS centrality $\shap(S)$ of a set $S$, we show the following analogous lemma.
\begin{lemma}[IGS centrality via RR sets]\label{lem: shapley value identity}
  Let \(S\subseteq V\), it holds that \(\shap(S) = n \cdot \E_R[\frac{\ones_{R\cap S\neq \emptyset}}{|R\setminus S| + 1}]\).
\end{lemma}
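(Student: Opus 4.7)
The plan is to combine the two earlier ingredients on Group Shapley values and on RR sets: the permutation reformulation of $\shap(S)$ from Observation~\ref{obs: group shapley with permutation}, the marginal contribution identity from Lemma~\ref{lem: marginal contribution}, and the intersection probability from Observation~\ref{obs: intersection probability shapley}. The basic idea is to expand $\shap(S)$ using the permutation formulation, rewrite each marginal $\sigma(T_{\pi,\hat s}\cup S)-\sigma(T_{\pi,\hat s})$ as a probability over random RR sets, and then swap the order of the expectations.

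Concretely, I would first write
\[
    \shap(S) = \E_{\pi}[\sigma(T_{\pi,\hat s}\cup S) - \sigma(T_{\pi,\hat s})]
             = n\cdot \E_\pi\bigl[\Pr_R[R\cap S\neq\emptyset\ \land\ R\cap T_{\pi,\hat s}=\emptyset]\bigr],
\]
where the first equality is Observation~\ref{obs: group shapley with permutation} and the second applies Lemma~\ref{lem: marginal contribution} to $T=T_{\pi,\hat s}$ (note that $T_{\pi,\hat s}\subseteq V\setminus S$, so the hypothesis of the lemma is satisfied). Then I swap the two expectations (by Fubini, since all quantities are nonnegative and finite) to obtain
\[
    \shap(S) = n\cdot \E_R\bigl[\Pr_\pi[R\cap S\neq \emptyset\ \land\ R\cap T_{\pi,\hat s}=\emptyset]\bigr].
\]

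Now for fixed $R$, the inner probability is just an indicator times a conditional probability. If $R\cap S=\emptyset$, it is $0$, which already matches the indicator $\ones_{R\cap S\neq \emptyset}$ in the target expression. If $R\cap S\neq \emptyset$, I would note that $T_{\pi,\hat s}$ is a subset of $V\setminus S$ that does not contain $\hat s$, so $R\cap T_{\pi,\hat s} = (R\setminus S)\cap T_{\pi,\hat s} = R_{\overline S}\cap T_{\pi,\hat s}$. Hence Observation~\ref{obs: intersection probability shapley} applies and gives $\Pr_\pi[R\cap T_{\pi,\hat s}=\emptyset]=1/(|R\setminus S|+1)$. Plugging this into the display above yields exactly $\shap(S) = n\cdot \E_R[\ones_{R\cap S\neq \emptyset}/(|R\setminus S|+1)]$.

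There is no real obstacle beyond the bookkeeping: one has to be careful that Lemma~\ref{lem: marginal contribution} is applied to the set $T_{\pi,\hat s}$, which is genuinely disjoint from $S$, and that the auxiliary item $\hat s$ used in Observation~\ref{obs: intersection probability shapley} does not contaminate the identification $R\cap T_{\pi,\hat s}=R_{\overline S}\cap T_{\pi,\hat s}$. Once these two points are in place, the proof is a short chain of substitutions.
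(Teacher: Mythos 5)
Your proof is correct and follows essentially the same route as the paper's: expand $\shap(S)$ via the permutation formulation, apply Lemma~\ref{lem: marginal contribution} to each marginal, swap the two expectations, and finish with Observation~\ref{obs: intersection probability shapley}. The extra care you take in identifying $R\cap T_{\pi,\hat s}$ with $R_{\overline{S}}\cap T_{\pi,\hat s}$ is a detail the paper leaves implicit, but it is exactly the right justification.
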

\begin{proof}
  For the IGS centrality, it holds that
  \begin{align*}
      \shap(S)
      &= \E_{\pi \sim \Pi(V_{\overline{S}})}[\sigma(T_{\pi,s}\cup S) - \sigma(T_{\pi,s})]\\
      &= \E_{\pi \sim \Pi(V_{\overline{S}})}[n\cdot \Pr_R(R\cap S \neq \emptyset \land R \cap T_{\pi,s} = \emptyset)]\\
      &=n\cdot \E_R[\E_{\pi \sim \Pi(V_{\overline{S}})}[\ones_{R\cap S \neq \emptyset \land R\cap T_{\pi,s} = \emptyset}]]\\
      &= n\cdot \E_R\Big[\frac{\ones_{R\cap S\neq \emptyset}}{|R\setminus S| + 1}\Big],
  \end{align*}
using Lemma~\ref{lem: marginal contribution} and Observation~\ref{obs: intersection probability shapley}. 
\end{proof}
Consequently from this lemma, we obtain that the IGS centrality is a monotonously increasing set function. Furthermore, Lemma~\ref{lem: shapley value identity} provides the following observation on the range of IGS centralities.

\begin{observation}[Range of IGS Centralities]\label{obs: range}
  Let $\Sshap$ be a set of size $k$ maximizing $\shap$. Then $\shap(\Sshap)\ge 1$. Moreover, $\shap(S)\ge \frac{k}{n}$ for any $S\subseteq V$ of size $k$. Lastly, $\shap(S)\le n$ for any $S$ and $\shap(V)=n$.
\end{observation}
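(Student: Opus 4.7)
The plan is to derive all four claims directly from Lemma~\ref{lem: shapley value identity} (the RR-set identity for $\shap$), Lemma~\ref{lem: marginal contribution} (the marginal contribution formula), and the stated monotonicity of $\shap$. None of the four bounds seems to require anything beyond routine application of these tools, so the main task is to identify the right estimate of the integrand in the RR-set identity in each case.

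First I would tackle the upper bounds. For any $S\subseteq V$ and any realization of a random RR set $R$, we have $\ones_{R\cap S\neq\emptyset}\le 1$ and $|R\setminus S|+1\ge 1$, so the integrand in Lemma~\ref{lem: shapley value identity} is at most $1$ and thus $\shap(S)\le n$. For $\shap(V)=n$, I would observe that any RR set contains at least its root node, so $R\cap V=R\neq\emptyset$ and the indicator always equals $1$, while $R\setminus V=\emptyset$ forces the denominator to equal $1$; the RR-set identity then yields $\shap(V)=n$.

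For the lower bound $\shap(S)\ge k/n$ on size-$k$ sets, I would use the trivial containment $R\setminus S\subseteq V\setminus S$ to bound $|R\setminus S|\le n-k$, which gives
\[
  \shap(S) \;\ge\; \frac{n}{n-k+1}\,\Pr_R[R\cap S\neq\emptyset] \;=\; \frac{\sigma(S)}{n-k+1},
\]
where the equality is Lemma~\ref{lem: marginal contribution}. Since the $k$ seed nodes in $S$ are active from time zero, $\sigma(S)\ge k$, and therefore $\shap(S)\ge k/(n-k+1)\ge k/n$.

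Finally, for $\shap(\Sshap)\ge 1$, I would invoke the monotonicity of $\shap$ stated just after Lemma~\ref{lem: shapley value identity}: for any $v\in V$ and any size-$k$ set $S$ containing $v$, $\shap(S)\ge \shap(\{v\})$. Since $\shap(\{v\})$ coincides with the classical Shapley value of $v$ with respect to $\sigma$, the efficiency axiom gives $\sum_{v\in V}\shap(\{v\})=\sigma(V)-\sigma(\emptyset)=n-0=n$ (where $\sigma(V)=n$ is already the fourth claim and $\sigma(\emptyset)=0$ is immediate). Hence some $v^*$ satisfies $\shap(\{v^*\})\ge 1$, and extending $\{v^*\}$ to any size-$k$ superset shows that the optimum $\Sshap$ satisfies $\shap(\Sshap)\ge 1$. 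There is no real obstacle here; the only small subtlety is noticing that claim (4) is what underwrites the efficiency identity used for claim (1).
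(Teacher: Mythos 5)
Your proof is correct, and for three of the four claims it follows the paper's own route: the upper bounds come from bounding the integrand in Lemma~\ref{lem: shapley value identity} by $1$ (with the denominator collapsing to $1$ for $S=V$), and the bound $\shap(\Sshap)\ge 1$ comes from the normalization/efficiency axiom $\sum_{i\in V}\shap(\{i\})=n$ plus monotonicity, exactly as in the paper. The one place you genuinely diverge is the bound $\shap(S)\ge k/n$: the paper decomposes the expectation over the uniformly random root, keeps only the $k$ roots lying in $S$ (for which the indicator is automatically $1$), and bounds each such term's denominator by $n$; you instead bound $|R\setminus S|\le n-k$ uniformly over all RR sets, pull the denominator out, and convert $\Pr_R[R\cap S\neq\emptyset]$ back to $\sigma(S)/n$ via Lemma~\ref{lem: marginal contribution}, finishing with $\sigma(S)\ge k$. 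Both arguments are valid and equally elementary; yours has the small advantage of producing the slightly sharper intermediate bound $k/(n-k+1)$, while the paper's conditioning-on-the-root argument stays entirely inside the RR-set identity and needs no appeal to $\sigma(S)\ge k$. One cosmetic slip: you say the efficiency identity is ``underwritten'' by claim (4), but claim (4) asserts $\shap(V)=n$, whereas what the efficiency axiom needs is $\sigma(V)=n$ (and $\sigma(\emptyset)=0$), which holds trivially because all nodes are seeds; the two facts are distinct even though both are immediate.
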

\begin{proof}
  According to the normalization axiom of the Shapley value (for single items), we have $\sum_{i\in V}\shap(\{i\})=n$, hence there is a node $i_0$ for which $\shap(\{i_0\})\ge 1$.
  Since $\shap$ is a monotonously increasing set function, it holds that $\shap(S)\ge 1$ for any set $S$ of size $k$ containing $i_0$, thus also $\shap(\Sshap)\ge 1$. In order to show that $\shap(S)\ge\frac{k}{n}$ for any $S\subseteq V\setminus\emptyset$ of size $k$. With $R(u, X)$ we denote the RR set sampled from a node $u$ for an outcome profile $X$. We then observe that, according to Lemma~\ref{lem: shapley value identity}, $\shap(S)$ equals
  \[
    n \cdot \E_R\Big[\frac{\ones_{R\cap S\neq \emptyset}}{|R\setminus S| + 1}\Big] \ge n\cdot \frac{1}{n} \cdot \sum_{v\in S}\E_X\Big[\frac{\ones_{R(v,X)\cap S\neq \emptyset}}{|R(v,X)\setminus S| + 1}\Big]\ge \frac{k}{n},
  \]
  using that $|R(v,X)\setminus S|\le n-1$ as $v\in S$.
  The remaining claims follow analogously using the same equality from Lemma~\ref{lem: shapley value identity}.
\end{proof}
Our ultimate goal would be to find a set \(S\) of size at most $k$ with highest IGS centrality among all such sets. This is formalized below.
\begin{cproblem}{\MGSS}
Input: Influence maximization instance on digraph $G$, integer \(k\).

Find: \(S \subseteq V\) s.t.\ \(|S| \le k\), maximizing \(\shap(S)\).
\end{cproblem}
The naive approach for solving the optimization problem \MGSS would be to evaluate $\shap$ for all subsets of size at most $k$ and pick the one with highest value. Unfortunately, the formula for computing $\shap$ for a single set \(S\) given in Equation~\ref{eq:Shapley Centrality} is already not practical as it requires to compute the difference \(\sigma(S\cup T) - \sigma(T)\) for an exponential number of sets \(T\). Alternatively, one could try to follow an approach similar to the one taken by Chen and Teng~\cite{chen2017interplay} for the Shapley centrality of single nodes. Such approach for IGS centrality however would require updating $O(n^k)$ estimates (one for each candidate set) in every iteration. We will see later on in Section~\ref{sec: approximation algorithm} how to avoid this taking a different route. 
In the next section, we focus on overcoming the first difficulty, i.e., we show how to approximate IGS centrality. The approach relies on the representation of $\shap$ given in Lemma~\ref{lem: shapley value identity} and, non-surprisingly, on a Chernoff bound.

\section{EVALUATING IGS CENTRALITY}
This section is concerned with the question of estimating the function $\shap$. We first give a straightforward result that shows how to compute $\shap(S)$ for a given set $S$. Thereafter, we show that by sampling a sufficient number of RR sets, we can give a set function $\GSC$ that with high probability approximates $\shap$ in a sense that suffices for obtaining an approximation algorithm for \MGSS. The main tool for this section is the following classical Chernoff bound that can be found in the survey by Chung and Lu~\cite[Theorem 4]{chung2006concentration} or in Appendix C.2 of the full version of Chen and Teng's paper~\cite{chen2017interplay}.
\begin{fact}[Chernoff Bound]\label{fact: chernoff}
    Let $Y$ be the sum of $t$ i.i.d. random variables with mean $\mu$ and value range $[0,1]$.
    \begin{enumerate}
        \item\label{item 01} For any $\alpha\in (0,1)$, we have
        $
            \Pr[\frac{Y}{t} - \mu \le -\alpha \mu]
            \le \exp(-\frac{\alpha^2}{2}t\mu).
        $
        \item\label{item pos}
        For any $\alpha>0$, we have
        $
            \Pr[\frac{Y}{t} - \mu \ge \alpha \mu]
            \le \exp(-\frac{\alpha^2}{2+\frac{2}{3}\alpha}t\mu).
        $
    \end{enumerate}
\end{fact}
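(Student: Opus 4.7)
The plan is to follow the classical Chernoff--Cram\'er argument: apply Markov's inequality to the exponential moment $e^{\lambda Y}$ for a free parameter $\lambda$, factorize using independence, bound each factor using convexity of the exponential on $[0,1]$, and then pick $\lambda$ so that the exponent matches the form in the statement. The uniform moment-generating-function bound comes from observing that $x \mapsto e^{\lambda x}$ is convex, so $e^{\lambda x} \le 1 + (e^\lambda - 1)x$ on $[0,1]$ (the right-hand side is the secant through $x=0$ and $x=1$). Taking expectations and using $1+y \le e^y$ gives $\E[e^{\lambda X_i}] \le \exp((e^\lambda - 1)\mu)$, and by independence $\E[e^{\lambda Y}] \le \exp(t\mu(e^\lambda - 1))$ for every $\lambda \in \RR$.

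For item~\ref{item pos}, Markov's inequality then yields $\Pr[Y \ge (1+\alpha)t\mu] \le \exp(t\mu[(e^\lambda - 1) - \lambda(1+\alpha)])$ for $\lambda > 0$. The right-hand side is minimized at $\lambda = \ln(1+\alpha)$, giving $\Pr[Y/t - \mu \ge \alpha\mu] \le \exp(-t\mu\,\psi(\alpha))$ where $\psi(\alpha) := (1+\alpha)\ln(1+\alpha) - \alpha$. It remains to show the inequality $\psi(\alpha) \ge \alpha^2/(2 + 2\alpha/3)$ for $\alpha > 0$, which is the standard technical step in Bennett/Bernstein-type bounds; it can be proved by clearing denominators and verifying the resulting one-variable polynomial inequality via power-series expansion around $\alpha = 0$.

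For item~\ref{item 01}, the same template applies with negative $\lambda$, and optimization yields $\lambda = \ln(1-\alpha) < 0$. The resulting exponent in the bound is $t\mu \cdot g(\alpha)$ with $g(\alpha) := -(1-\alpha)\ln(1-\alpha) - \alpha$. A direct computation gives $g(0) = g'(0) = 0$ and $g''(\alpha) = -1/(1-\alpha) \le -1$ on $[0,1)$, so Taylor's theorem yields $g(\alpha) \le -\alpha^2/2$, which is exactly the claimed bound. Overall the argument is routine; the only point that requires any care is verifying the Bernstein-type one-variable inequality needed in the upper-tail case.
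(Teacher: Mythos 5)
The paper does not prove this Fact at all: it is imported verbatim, with a pointer to Chung--Lu's survey and to Appendix C.2 of Chen and Teng's paper. Your proposal supplies the standard Chernoff--Cram\'er derivation, and it is essentially correct: the secant bound $e^{\lambda x}\le 1+(e^\lambda-1)x$ on $[0,1]$, the resulting MGF bound $\exp(t\mu(e^\lambda-1))$, the optimal choices $\lambda=\ln(1+\alpha)$ and $\lambda=\ln(1-\alpha)$, and the reduction of item~1 to $g''(\alpha)=-1/(1-\alpha)\le -1$ are all right, and they recover exactly the two exponents stated. The one place where your justification is shaky is the Bernstein-type inequality $(1+\alpha)\ln(1+\alpha)-\alpha\ge \alpha^2/(2+\tfrac{2}{3}\alpha)$: clearing denominators does not produce a polynomial inequality (the logarithm survives), and a power-series expansion of $(1+\alpha)\ln(1+\alpha)$ around $\alpha=0$ only converges for $\alpha\le 1$, whereas item~2 is claimed for all $\alpha>0$. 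A clean fix is to set $h(\alpha):=(1+\alpha)\ln(1+\alpha)-\alpha-\frac{3\alpha^2}{6+2\alpha}$, note $h(0)=h'(0)=0$, and check $h'(\alpha)=\ln(1+\alpha)-\frac{6\alpha(6+\alpha)}{(6+2\alpha)^2}\ge 0$ using the elementary bound $\ln(1+\alpha)\ge \frac{2\alpha}{2+\alpha}$ together with the polynomial inequality $6(6+\alpha)(2+\alpha)\le 2(6+2\alpha)^2$. With that substitution your argument is complete and self-contained, which is more than the paper itself provides.
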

\paragraph{Approximately Evaluating $\shap(S)$.}
In this paragraph, we show that using the above Chernoff bound we can, in a straightforward way, obtain a $(1\pm\epsilon)$-approximation $\tGSC(S)$ of $\shap(S)$ for any set $S\subseteq V$ and $\epsilon \in (0,1)$ by sampling $\Theta(n^2\epsilon^{-2}\log n)$ RR sets.
\begin{lemma} \label{lem:evaluate}
  Let $S\subseteq V$ and $\epsilon\in (0,1)$. Let $R_1,\ldots, R_t$ be a sequence of $t\ge 6n^2\epsilon^{-2}c\log(n)$ RR sets for some constant $c\ge 2$. Then, with probability at least $1-n^{-c}$, it holds that
  $\tGSC(S) := \frac{n}{t} \sum_{i=1}^t \frac{ \ones_{R_i\cap S\neq \emptyset}}{|R_i\setminus S| + 1}$
  satisfies $|\tGSC(S) - \shap(S)| < \epsilon\cdot \shap(S)$.
\end{lemma}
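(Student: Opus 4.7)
The plan is to apply Fact~\ref{fact: chernoff} to the sequence of i.i.d.\ random variables $X_i := \ones_{R_i\cap S\neq \emptyset}/(|R_i\setminus S|+1)$, one per sampled RR set. Each $X_i$ takes values in $[0,1]$, since the numerator is $0$ or $1$ and the denominator is at least $1$. By Lemma~\ref{lem: shapley value identity} the common mean is $\mu := \E[X_i] = \shap(S)/n$, and by construction $\tGSC(S) = (n/t)\sum_{i=1}^t X_i = n \cdot (Y/t)$ where $Y=\sum_i X_i$. Hence the deviation event $|\tGSC(S) - \shap(S)| \ge \epsilon \shap(S)$ is exactly the event $|Y/t - \mu| \ge \epsilon\mu$, so it suffices to bound the two one-sided tails separately and take a union bound.

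Next I would lower bound $\mu$. If $S=\emptyset$ then $\shap(\emptyset)=0$ and the claim is vacuous, so assume $|S|\ge 1$. Observation~\ref{obs: range} then gives $\shap(S) \ge |S|/n \ge 1/n$, whence $\mu \ge 1/n^2$. Combined with the hypothesis $t \ge 6n^2\epsilon^{-2} c\log n$ this yields
\[
  t\mu \;\ge\; \frac{6c\log n}{\epsilon^2}.
\]
Applying Fact~\ref{fact: chernoff}\eqref{item 01} with $\alpha=\epsilon$ gives the lower-tail bound $\Pr[Y/t - \mu \le -\epsilon\mu] \le \exp(-\epsilon^2 t\mu/2) \le n^{-3c}$. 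For the upper tail, since $\epsilon\in(0,1)$ we have $2+2\epsilon/3 \le 8/3$, so Fact~\ref{fact: chernoff}\eqref{item pos} gives
\[
  \Pr[Y/t - \mu \ge \epsilon\mu] \;\le\; \exp\!\left(-\tfrac{\epsilon^2 t\mu}{2+2\epsilon/3}\right) \;\le\; \exp\!\left(-\tfrac{3\epsilon^2 t\mu}{8}\right) \;\le\; n^{-9c/4}.
\]
A union bound then yields total failure probability at most $n^{-3c} + n^{-9c/4} \le n^{-c}$ for $c\ge 2$ and $n\ge 2$, completing the argument.

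There is no real obstacle here; the only thing to get right is matching the constant $6$ in the sample-size hypothesis against the asymmetry between the two Chernoff tails. The upper tail has the smaller exponent ($9c/4$ versus $3c$) and hence dictates how large the constant $6$ must be in order that both tails still beat $n^{-c}$ after the union bound.
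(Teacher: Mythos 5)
Your proposal is correct and follows essentially the same route as the paper: the same random variables $Y_i(S)=\ones_{R_i\cap S\neq\emptyset}/(|R_i\setminus S|+1)$, the identification $\mu=\shap(S)/n$ via Lemma~\ref{lem: shapley value identity}, the lower bound $\mu\ge 1/n^2$ from Observation~\ref{obs: range}, and the two-sided Chernoff bound from Fact~\ref{fact: chernoff}. The only cosmetic difference is that you track the two tails with their separate exponents before the union bound, whereas the paper merges them into a single $2\exp(-\epsilon^2 t\mu/3)$ estimate.
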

\begin{proof}
   If $S=\emptyset$, the statement trivially holds. Otherwise, define the random variables $Y_i(S):=\frac{\ones_{R_i\cap S\neq \emptyset}}{|R_i\setminus S| + 1}\in[0,1]$ for $i\in[t]$ and let $Y(S):=\sum_{i=1}^t Y_i(S)$ as well as $\mu(S):=\E[Y_i(S)]$.
    Clearly, $\mu(S)=\shap(S)/n$ by Lemma~\ref{lem: shapley value identity} and $\tGSC(S)=\frac{n}{t}Y(S)$. Thus $\Pr[|\tGSC(S) - \shap(S)| \ge \epsilon \cdot \shap(S)]$ equals
    \begin{align*}
        \Pr\Big[\Big|\frac{Y(S)}{t} - \mu(S)\Big| \ge \epsilon \cdot \mu(S)\Big]
        &\le 2\exp\Big(-\frac{\epsilon^2}{3}\cdot t\mu(S)\Big)
    \end{align*}
    using Fact~\ref{fact: chernoff} and $\epsilon\le 1$. We lower bound $\mu(S)=\phi(S)/n\ge 1/n^2$ using Observation~\ref{obs: range}. The choice of $t$ leads to the bound of $n^{-c}$.
\end{proof}

\paragraph{An Approximate Characterization $\GSC$ of $\shap$.} 
Lemma~\ref{lem:evaluate} is unsatisfactory for the following reason. It samples a number of RR sets that is quadratic in the number of nodes $n$ even for evaluating the IGS centrality of a single group. In this paragraph, we show how to circumvent this problem. We show that a near-linear number of RR sets suffices to compute, for any set \(S\), an approximation \(\GSC(S)\) of $\shap(S)$ that is good enough for giving an approximation algorithm for \MGSS.
More precisely, we define a function \(\GSC\) that meets the following two conditions: (1) For any set \(S\) of size $k$, \(\GSC(S)\) does not overestimate \(\shap(S)\) too much and (2) For an optimal set $\Sshap$, \(\GSC(\Sshap)\) does not underestimate \(\shap(\Sshap)\) too much. We will show that these conditions suffice for a set \(S\) that is close to being optimal for \(\GSC\) to also be close to being optimal for \(\shap\). 

\begin{thrm} \label{thm:link shap hmhs}
  Let $\epsilon\in (0,1)$ and $R_1,\ldots, R_t$ be a sequence of RR sets of length $t\ge 6n\epsilon^{-2}(c+k)\log(n)$ for some constant $c\ge 2$.  Let $S^*$ be a set of size at most $k$ maximizing $\shap$ and let
  \[
    \GSC(S):=\frac{n}{t} \sum_{i=1}^t \frac{ \ones_{R_i\cap S\neq \emptyset}}{|R_i\setminus S| + 1} \quad \text{ for each set }S\in \binom{V}{k}.
  \]
  Then, with probability at least $1-n^{-c}$, the following conditions hold.
  \begin{enumerate}
      \item[(1)] $\GSC(S) - \shap(S)< \epsilon\cdot \shap(\Sshap)$, for all $S\in \binom{V}{k}$.
      \item[(2)] $\GSC(\Sshap) - \shap(\Sshap) > -\epsilon\cdot \shap(\Sshap)$.
  \end{enumerate}
\end{thrm}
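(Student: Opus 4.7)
The strategy is to rewrite $\GSC(S)$ as a sample mean, apply the Chernoff bound (Fact~\ref{fact: chernoff}) in a tailored way for each of the two conditions, and finish with a union bound. For each $S\in \binom{V}{k}$ and $i\in[t]$, let $Y_i(S):=\ones_{R_i\cap S\neq \emptyset}/(|R_i\setminus S|+1)\in[0,1]$; these are i.i.d.\ (in $i$) with mean $\mu(S):=\shap(S)/n$ by Lemma~\ref{lem: shapley value identity}, and $\GSC(S)=\frac{n}{t}\sum_i Y_i(S)$.

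Condition~(2) concerns the single set $S^*$ and asks for a one-sided deviation of $\epsilon\shap(S^*)$, which is a \emph{relative} error of $\epsilon$ on $\mu(S^*)$. The natural move is to apply Fact~\ref{fact: chernoff}~(\ref{item 01}) with $\alpha=\epsilon$, giving a failure probability at most $\exp(-\tfrac{\epsilon^2}{2}t\mu(S^*))$. Using Observation~\ref{obs: range} to get $\mu(S^*)=\shap(S^*)/n\ge 1/n$, the assumption $t\ge 6n\epsilon^{-2}(c+k)\log n$ collapses this to $n^{-3(c+k)}$, much smaller than $n^{-c}/2$.

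Condition~(1) is the main obstacle: we need a single one-sided absolute bound $\epsilon\shap(S^*)$ that must hold \emph{uniformly over} all $S\in\binom{V}{k}$, yet $\shap(S)$ can be arbitrarily smaller than $\shap(S^*)$, so the event translates to a potentially huge relative deviation. For fixed $S$, I would set $\alpha_S:=\epsilon\cdot\shap(S^*)/\shap(S)\ge\epsilon$ and invoke Fact~\ref{fact: chernoff}~(\ref{item pos}), bounding the failure probability by
\[
  \exp\!\Bigl(-\tfrac{\alpha_S^2}{2+\tfrac{2}{3}\alpha_S}\cdot t\mu(S)\Bigr)
  \;=\; \exp\!\Bigl(-\tfrac{t}{n}\cdot\tfrac{\epsilon^2\shap(S^*)^2}{2\shap(S)+\tfrac{2\epsilon}{3}\shap(S^*)}\Bigr).
\]
The conceptually interesting step is that Fact~\ref{fact: chernoff}~(\ref{item pos}) tolerates an unbounded $\alpha_S$ because its denominator $2+\tfrac{2}{3}\alpha_S$ grows linearly in $\alpha_S$. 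Using $\shap(S)\le\shap(S^*)$ and $\epsilon\le 1$, the denominator above is at most $\tfrac{8}{3}\shap(S^*)$, so the exponent is at most $-\tfrac{3\epsilon^2 t\shap(S^*)}{8n}\le -\tfrac{3\epsilon^2 t}{8n}$ by Observation~\ref{obs: range}. Plugging in $t\ge 6n\epsilon^{-2}(c+k)\log n$ gives a per-$S$ failure probability of at most $n^{-9(c+k)/4}$.

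A union bound over the $\binom{n}{k}\le n^k$ candidate sets turns this into $n^{k-9(c+k)/4}\le n^{-c}/2$ (comfortably, using $c\ge 2$), and a final union bound with condition~(2) yields the stated overall probability $1-n^{-c}$. What remains is routine: carefully tracking constants in the two Chernoff applications to confirm that the prescribed $t\ge 6n\epsilon^{-2}(c+k)\log n$ suffices.
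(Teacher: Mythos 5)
Your proposal is correct and takes essentially the same route as the paper's own proof: the same rewriting of $\GSC(S)$ as a sample mean of the variables $Y_i(S)$ with mean $\shap(S)/n$, the same per-set application of Fact~\ref{fact: chernoff}~(\ref{item pos}) with $\alpha_S=\epsilon\shap(\Sshap)/\shap(S)$ for condition~(1), the lower-tail bound with $\alpha=\epsilon$ for condition~(2), and the same union bounds over the $n^k$ candidate sets; only the intermediate constants differ slightly, and both choices are covered by the prescribed $t$.
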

\begin{proof}
    We first show that (1) holds with probability at least $1-\frac{1}{2}n^{-c}$. Let $S\in \binom{V}{k}$. Define the random variables $Y_i(S):=\frac{\ones_{R_i\cap S\neq \emptyset}}{|R_i\setminus S| + 1}\in[0,1]$ for every $i\in[t]$ and let $Y(S):=\sum_{i=1}^t Y_i(S)$ as well as $\mu(S):=\E[Y_i(S)]$.
    Clearly, $\mu(S)=\shap(S)/n$ by Lemma~\ref{lem: shapley value identity} and $\GSC(S)=\frac{n}{t}Y(S)$. Let $\alpha:=\frac{\epsilon \shap(\Sshap)}{\shap(S)}$. Then, $\Pr[\GSC(S) - \shap(S) \ge \epsilon \cdot \shap(\Sshap)]$ equals
    \begin{align*}
        \Pr\Big[\frac{Y(S)}{t} - \mu(S) \ge \alpha \cdot \frac{\shap(S)}{n}\Big]
        &\le \exp\Big(-\frac{\alpha^2}{2+\frac{2}{3}\alpha}\cdot t\mu(S)\Big),
    \end{align*}
    using Fact~\ref{fact: chernoff}. Using the definition of $\alpha$, we get that the argument of $\exp$ is equal to $-\frac{\epsilon^2 \shap(\Sshap)}{2 \shap(S)/\shap(\Sshap)+2\epsilon/3}\cdot \frac{t}{n}\le -\frac{\epsilon^2 \shap(\Sshap)}{3n}\cdot t$ using that $\epsilon<1$ and $\shap(S)\le \shap(\Sshap)$.
    Using that $\shap(\Sshap)\ge 1$ according to Observation~\ref{obs: range} and the definition of $t$ lead to the upper bound of $1/n^{2(c + k)}$.
    A union bound over all at most $n^k$ sets in $\binom{[n]}{k}$ shows that $\GSC(S) - \shap(S) \ge \epsilon \cdot \shap(\Sshap)$ holds for every such $S$ with probability at most $n^k\cdot n^{-2(c+k)}\le \frac{1}{2}\cdot n^{-c}$.

    We proceed to condition (2). It holds that $\Pr[\GSC(\Sshap) - \shap(\Sshap) \le -\epsilon \cdot \shap(\Sshap)]$ is equal to 
    \begin{align*}
        \Pr\Big[\frac{Y(\Sshap)}{t} - \mu(\Sshap) \le -\epsilon \cdot \frac{\shap(\Sshap)}{n}\Big]
        &\le \exp\Big(-\frac{\epsilon^2}{2}\cdot t\mu(\Sshap)\Big)
    \end{align*}
    using Fact~\ref{fact: chernoff} with $\alpha=\epsilon$. 
    Furthermore, $\mu(\Sshap)=\shap(\Sshap)/n$ and again $\shap(\Sshap)\ge 1$ as well as the definition of $t$ yield that $\GSC(\Sshap) - \shap(\Sshap) \le -\epsilon \cdot \shap(\Sshap)$ holds with probability at most $\frac{1}{2} \cdot n^{-c}$. A union bound over the probabilities that (1) or (2) do not hold, concludes the proof.
\end{proof}

In the next section, we investigate how to find an approximation algorithm for the \MGSS problem without computing the centralities of all sets of size $k$. 

\section{FINDING GROUPS OF LARGE IGS CENTRALITY}\label{sec: approximation algorithm}
To address the \MGSS problem, Theorem~\ref{thm:link shap hmhs} suggests the following approach. Sample a near-linear number $t$ of RR sets and compute a set of nodes $S$ that maximizes $\GSC(S)$. We formalize this problem as a variant of the well known \MHS problem, that we call the \HMHS problem.
\begin{cproblem}{\HMHS}
  Input: set \(X = \{x_1,\ldots,x_n\}\), set \(Z = \{Z_1, \ldots,Z_m\}\) of subsets of \(X\), integer \(k\).

  Find: $S\subseteq X$ s.t.\ $|S|\le k$ maximizing 
  \(    
    f_Z(S):=\sum_{i=1}^m \frac{\ones_{Z_i \cap S \neq \emptyset}}{|Z_i\setminus S|+1}.
  \)
\end{cproblem}
It is a non-linear variant of the well-known \MHS problem (which is itself equivalent to the \textsc{Max-Set-Cover}\xspace problem~\cite{garey2002computers}) in which the objective function is \(\sum_{i=1}^m \ones_{Z_i \cap S \neq \emptyset}\).

The problem of maximizing the previously defined function \(\GSC\) can be stated as a \HMHS problem by letting \(X\) be the set $V$ of nodes in graph $G$ and \(Z\) be the set of generated RR sets. The connection between the \HMHS and \MGSS problems from an approximation algorithm's perspective is made more formal in the next lemma.

\begin{lemma}\label{lem: MGSS HMHS}
    Let $\alpha\in(0,1]$, $\epsilon \in (0, 1)$, $c\ge 2$ and $k\in [n]$. Let $S_{\alpha}$ be an \(\alpha\)-approximate solution for the \(\HMHS\) problem with budget $k$, $X=V$ and $Z=\{R_1,\ldots, R_t\}$ s.t. $t\ge 24n\epsilon^{-2}(c+k)\log(n)$ RR sets. Then, $S_{\alpha}$ is an \((\alpha - \epsilon)\)-approximation for \MGSS with probability at least $1-n^{-c}$.
\end{lemma}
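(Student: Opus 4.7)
The plan is to reduce the statement directly to Theorem~\ref{thm:link shap hmhs}, the key observation being that, when $X = V$ and $Z = \{R_1, \ldots, R_t\}$, the \HMHS objective $f_Z$ and the surrogate centrality $\GSC$ differ only by the positive factor $n/t$. Hence any $\alpha$-approximate solution for $f_Z$ is automatically an $\alpha$-approximate solution for $\GSC$ over the same feasible domain (sets of size at most $k$). Since both $\shap$ and $\GSC$ are monotone non-decreasing, maximizers on sets of size $\le k$ have size exactly $k$, so the comparison with the theorem (which quantifies over $\binom{V}{k}$) causes no issue.

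First I would invoke Theorem~\ref{thm:link shap hmhs} with parameter $\epsilon' := \epsilon/2$ in place of $\epsilon$. The chosen sample size $t \ge 24 n \epsilon^{-2}(c+k)\log n$ exactly matches the requirement $t \ge 6 n (\epsilon')^{-2}(c+k)\log n$ of the theorem. This yields a good event $E$ of probability at least $1 - n^{-c}$ on which the two inequalities
\[
  \GSC(S) - \shap(S) < \tfrac{\epsilon}{2}\,\shap(\Sshap) \quad \text{for all } S \in \binom{V}{k},
\]
\[
  \GSC(\Sshap) - \shap(\Sshap) > -\tfrac{\epsilon}{2}\,\shap(\Sshap)
\]
hold simultaneously.

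Next, conditioning on $E$, I would chain the bounds. Since $S_\alpha$ is $\alpha$-approximate for \HMHS and $\GSC = (n/t) f_Z$, it holds that $\GSC(S_\alpha) \ge \alpha \cdot \GSC(\Sshap)$ (as $\Sshap$ is a feasible set of size at most $k$ for \HMHS and $\GSC(\Sshap)$ lower bounds the maximum of $\GSC$). Applying condition (1) to $S_\alpha$ and then condition (2) to $\Sshap$ gives
\[
  \shap(S_\alpha) > \GSC(S_\alpha) - \tfrac{\epsilon}{2}\,\shap(\Sshap) \ge \alpha\,\GSC(\Sshap) - \tfrac{\epsilon}{2}\,\shap(\Sshap) > \alpha\bigl(1 - \tfrac{\epsilon}{2}\bigr)\shap(\Sshap) - \tfrac{\epsilon}{2}\,\shap(\Sshap).
\]
Using $\alpha \le 1$, the right-hand side simplifies to at least $(\alpha - \epsilon)\shap(\Sshap)$, which is the desired approximation guarantee.

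There is no real obstacle beyond bookkeeping: the only decision is to split the slack $\epsilon$ into $\epsilon/2$ for the sampling error (fed into Theorem~\ref{thm:link shap hmhs}) and $\epsilon/2$ absorbed via $\alpha \le 1$ in the final inequality. The factor-of-$4$ blow-up in $t$ (from $6$ to $24$) is precisely what accommodates this split. The probability bound $1 - n^{-c}$ is inherited directly from Theorem~\ref{thm:link shap hmhs}, so no further union bound is required.
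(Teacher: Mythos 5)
Your proof is correct and follows essentially the same route as the paper's: split $\epsilon$ into $\epsilon'=\epsilon/2$, invoke Theorem~\ref{thm:link shap hmhs} (noting $24=6\cdot 2^2$ accounts for the rescaling), transfer the $\alpha$-approximation from $f_Z$ to $\GSC$ via the factor $n/t$, and chain the two conditions to get $\alpha-\alpha\epsilon'-\epsilon'\ge\alpha-\epsilon$. No substantive difference from the paper's argument.
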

\begin{proof}
    Let  \(\epsilon' = \epsilon/2\). By Theorem~\ref{thm:link shap hmhs}, we have that $\GSC(S) - \shap(S)< \epsilon'\cdot \shap(\Sshap)$ for all $S\in \binom{[n]}{k}$ and $\GSC(\Sshap) - \shap(\Sshap) > -\epsilon'\cdot \shap(\Sshap)$ hold with probability at least $1-n^{-c}$,
    where $\GSC(S):=\sum_{i=1}^t \frac{n\cdot \ones_{R_i\cap S\neq \emptyset}}{t\cdot (|R_i\setminus S| + 1)}$.
    Let \(\Sshap\) (resp.\ \(\hat{S}^*\)) be a set of size $k$ maximizing \(\shap\) (resp.\ \(\GSC\)). Then with probability at least $1-n^{-c}$, it holds that
    \begin{align*}
        \shap(S_{\alpha}) &+ \epsilon' \cdot \shap(\Sshap)
        \ge \GSC(S_{\alpha})
        \ge \alpha \cdot \GSC(\hat{S}^*)\\
        &\ge \alpha \cdot \GSC(\Sshap)
        \ge \alpha \cdot (\shap(\Sshap) - \epsilon'\cdot \shap(\Sshap)),
    \end{align*}
    where we have used that \(\GSC(S_{\alpha})
        \ge \alpha \cdot \GSC(\hat{S}^*)\) as \(\GSC(\cdot) = \frac{n}{t} f_Z(\cdot)\). 
    The choice of $\epsilon'$ and $\alpha\le 1$ yield $\alpha - \alpha \epsilon'-\epsilon' \ge \alpha-\epsilon$. Thus $\shap(S_{\alpha}) \ge (\alpha - \epsilon)\cdot \shap(\Sshap)$ with probability at least $1-n^{-c}$.
\end{proof}

\paragraph{Approximation Algorithm.}
In this section, we describe a \(\frac{1-1/e}{k}\) approximation algorithm for the \HMHS problem. Consider an instance \((X,Z,k)\) and define the following set function
\(
    h_{Z}(S) := \sum_{i=1}^m  \ones_{Z_i \cap S \neq \emptyset}/|Z_i|.
\)
Note the similarity between $h_Z$ and $f_Z$. In fact, the approximation algorithm that we propose is to greedily maximize  \(h_{Z}\) instead of \(f_{Z}\). Why would this be a good idea? 
(1) The set function \(h_{Z}\) is monotone and submodular; thus the greedy algorithm will yield a \(1-1/e\) approximation to maximizing \(h_{Z}\). (2) Given a set \(S\subseteq X\) with \(|S|\le k\), it holds that
\begin{equation}
    f_{Z}(S) \ge h_{Z}(S) \ge f_{Z}(S)/k, \label{eq:encadrement}
\end{equation}
that is, the error when considering $h_Z$ instead of $f_Z$ is bounded by $k$. 
Hence, if we denote by \(S^{*}_f\) (resp. \(S^{*}_h\)) an optimal solution of size \(k\) for maximizing \(f_{Z}\) (resp. \(h_{Z}\)), we have that \(h_{Z}(S^{*}_h) \ge h_{Z}(S^{*}_f) \ge f_{Z}(S^{*}_f)/k\).
Now let \(S\) be the solution of size $k$ returned by the greedy algorithm. Then, 
\(S\) is a \(\frac{1-1/e}{k}\) approximation to maximizing \(f_{Z}\) as
\begin{align*}
    f_Z(S) &\ge h_Z(S) \ge \Big(1 - \frac{1}{e}\Big) \cdot h_{Z}(S^{*}_h)
    \ge \frac{1 - 1/e}{k} \cdot f_{Z}(S^{*}_f).
\end{align*}
It remains to prove the inequalities in~\eqref{eq:encadrement}.
\begin{proof}[Proof of Inequalities in~\eqref{eq:encadrement}]
    We note that, for any $S$, $f_{Z}(S)$ equals   
    \begin{align}\label{eq:fh}
        \sum_{i=1}^m  \frac{\ones_{Z_i \cap S \neq \emptyset}}{|Z_i\setminus S| + 1}
        = \sum_{i=1}^m  \frac{|Z_i|}{|Z_i|-|Z_i\cap S| + 1} \frac{\ones_{Z_i \cap S \neq \emptyset}}{|Z_i|}.
    \end{align}
    The left inequality in~\eqref{eq:encadrement} follows since $|Z_i|\ge |Z_i| - |Z_i\cap S| + 1$, if $Z_i \cap S \neq \emptyset$.
    Next, we observe that $\frac{|Z_i|}{|Z_i|-|Z_i\cap S| + 1}=1 + \frac{|Z_i\cap S| - 1}{|Z_i|-|Z_i\cap S| + 1}$ and, if $S$ is of size $k$, $|Z_i\cap S|\le k$. Together with $|Z_i|-|Z_i\cap S|\ge 0$, we obtain $\frac{|Z_i|}{|Z_i|-|Z_i\cap S| + 1} \le k$. The right inequality in~\eqref{eq:encadrement} follows.
\end{proof}
Using Lemma~\ref{lem: MGSS HMHS}, we thus obtain the following theorem.
\begin{thrm}\label{thrm: approximation algorithm}
  Let $\epsilon\in (0, 1)$ and $c\ge 2$. Using $\Theta(nk\epsilon^{-2}\log n)$ RR sets, we can obtain a $\frac{1-1/e}{k}-\epsilon$ approximation to the \MGSS problem with probability at least $1-n^{-c}$.
\end{thrm}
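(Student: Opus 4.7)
The plan is to simply assemble the pieces that are already in place. The setup before the theorem statement has already shown that the greedy algorithm on the submodular surrogate function $h_Z$ yields a set $S$ of size $k$ satisfying $f_Z(S) \ge \frac{1-1/e}{k}\cdot f_Z(S^*_f)$, i.e., a $\frac{1-1/e}{k}$-approximation for the \HMHS problem on any instance $(X,Z,k)$. Lemma~\ref{lem: MGSS HMHS} then states that if we set up a \HMHS instance with $X=V$ and $Z = \{R_1, \ldots, R_t\}$ for $t \ge 24 n \epsilon^{-2}(c+k)\log n$ randomly sampled RR sets, any $\alpha$-approximate solution is a $(\alpha-\epsilon)$-approximate solution for \MGSS with probability at least $1 - n^{-c}$. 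Combining $\alpha = \frac{1-1/e}{k}$ with this transfer statement immediately yields the claimed $\frac{1-1/e}{k}-\epsilon$ approximation ratio.

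Concretely, I would proceed as follows. First, sample $t = \lceil 24 n \epsilon^{-2}(c+k)\log n\rceil = \Theta(nk\epsilon^{-2}\log n)$ RR sets $R_1,\ldots,R_t$ according to the triggering-model process described in Section~\ref{subsec: Preliminaries on GSV for SI}. Second, instantiate the \HMHS problem with universe $X = V$, set system $Z = \{R_1, \ldots, R_t\}$, and budget $k$. Third, run the greedy algorithm that at each step picks the node maximizing the marginal increase of the monotone submodular function $h_Z$. By the standard Nemhauser–Wolsey–Fisher guarantee together with the sandwich inequality~\eqref{eq:encadrement} (both recalled above the theorem), the output $S$ satisfies $f_Z(S) \ge \frac{1-1/e}{k}\cdot f_Z(S^*_f)$, so $S$ is a $\frac{1-1/e}{k}$-approximation for \HMHS. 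Fourth, invoke Lemma~\ref{lem: MGSS HMHS} with $\alpha = \frac{1-1/e}{k}$ to conclude that $S$ is a $\frac{1-1/e}{k}-\epsilon$ approximation for \MGSS with probability at least $1-n^{-c}$.

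There is essentially no analytic obstacle here — all the heavy lifting has been done by Theorem~\ref{thm:link shap hmhs} (the Chernoff-based concentration result) and by the sandwich argument~\eqref{eq:encadrement} that links $f_Z$ to $h_Z$. The only small points to verify are bookkeeping: that the sample size $t$ chosen matches the $\Theta(nk\epsilon^{-2}\log n)$ stated in the theorem (it does, since $c$ is a constant), and that the RR-set sampling and the greedy algorithm can be performed in polynomial time, which is standard in the influence-maximization literature. If anything, the minor nuisance is to state cleanly that the failure probability comes entirely from Lemma~\ref{lem: MGSS HMHS}, since the greedy step is deterministic once the RR sets are fixed; a single appeal to that lemma suffices.
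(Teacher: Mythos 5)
Your proposal is correct and follows exactly the paper's route: the paper derives Theorem~\ref{thrm: approximation algorithm} by combining the $\frac{1-1/e}{k}$ greedy guarantee for \HMHS (obtained via the submodular surrogate $h_Z$ and the sandwich inequality~\eqref{eq:encadrement}) with the transfer statement of Lemma~\ref{lem: MGSS HMHS}, just as you do. The bookkeeping you mention (the sample size being $\Theta(nk\epsilon^{-2}\log n)$ for constant $c$, and the failure probability coming solely from the RR-set sampling) matches the paper's implicit argument.
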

We conclude this paragraph with a note on an alternative approach for an approximation algorithm with a similar ratio but worse dependency on the number of generated RR sets. Define the function $h(S):= n \cdot \E_R[\frac{\ones_{R\cap S\neq \emptyset}}{|R|}]$. Note that $\shap(S)\ge h(S)\ge \shap(S)/k$ holds following a proof analogous to the one of~\eqref{eq:encadrement}. Clearly, $h$ is a monotone and submodular set function (just as $h_Z$ is) and it can be approximated within a $1\pm \epsilon$ factor for any $\epsilon \in (0,1)$ in an analogous way as used in Lemma~\ref{lem:evaluate}. Thus the greedy algorithm can be used in order to maximize $h$ to within a $1 - 1/e - \epsilon$ factor subject to the cardinality constraint. Altogether, we obtain the approximation ratio of $\frac{1 - 1/e - \epsilon}{k}$ which is comparable to what is achieved by Theorem~\ref{thrm: approximation algorithm}. However the required number of RR sets in every step of the $k$ steps of this greedy naive approach is quadratic in the number of nodes.

While Theorem~\ref{thrm: approximation algorithm} provides an interesting result for small \(k\) values, it remains unsatisfactory for large $k$. One could hope for a much stronger result as for example constant-factor approximations. Unfortunately, this is unlikely as we will see in the following section, where we provide several approximation hardness results for the \MGSS problem.

\section{HARDNESS OF APPROXIMATION}
In this section, we show that \MGSS under the IC model is, up to a constant factor, as hard to approximate as \DKS. More precisely, we prove the following theorem.
\begin{thrm} \label{thm: GS hardness}
  Let $\alpha\in(0,1]$. If there is an \(\alpha\)-approximation algorithm for \MGSS, then there is an \(\alpha/8\)-approximation algorithm for \DKS.
\end{thrm}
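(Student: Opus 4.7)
The plan is to reduce \DKS to \MGSS via a direct Independent Cascade construction. Given a \DKS instance $(G,k)$ with $G=(V,E)$ undirected, I would build a digraph $G'$ whose Influence-based Group Shapley centrality encodes the density of the induced subgraph $G[S]$, up to a controlled multiplicative loss.

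Concretely, the natural first construction is $V' = V \cup \{r_e : e \in E\}$, with one auxiliary ``edge-root'' per edge and, for every edge $e = \{u,v\} \in E$, the two directed arcs $u \to r_e$ and $v \to r_e$ of IC probability $1$; the \MGSS budget stays $k' = k$. Since every $v \in V$ has no in-arcs and every $r_e$ has in-neighbours $u,v$ with probability~$1$, the only RR sets are $\{v\}$ for $v \in V$ and $\{r_e,u,v\}$ for $e = \{u,v\}$, each rooted at a single vertex of $V'$. Plugging this into Lemma~\ref{lem: shapley value identity} and averaging over a uniform root in $V'$ (of size $n + m$) yields, for every $S \subseteq V$ with $|S| = k$,
\[
  \shap(S) \;=\; k \;+\; \frac{m_{\mathrm{cut}}(S)}{3} \;+\; \frac{m_{\mathrm{in}}(S)}{2},
\]
where $m_{\mathrm{in}}(S)$ and $m_{\mathrm{cut}}(S)$ are the numbers of edges of $G$ with both and with exactly one endpoint in $S$.

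Using this closed form, I would then argue that any $\alpha$-approximate \MGSS solution $S$ can be converted into a \DKS solution $\tilde S$ with $m_{\mathrm{in}}(\tilde S) \ge (\alpha/8)\,m_{\mathrm{in}}(S^*_{\DKS})$. The argument combines the lower bound $\shap(S^*_{\MGSS}) \ge \shap(S^*_{\DKS}) \ge k + m_{\mathrm{in}}(S^*_{\DKS})/2$ with a careful bookkeeping of the slack contributed by the additive~$k$ and by the coverage-like term $m_{\mathrm{cut}}(S)/3$. The constant~$8$ comes from combining the coefficient $1/2$ in front of $m_{\mathrm{in}}$ with the worst-case loss incurred while absorbing these two slack terms via a post-processing step or gadget modification.

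The main obstacle is this last conversion step. The identity above makes $\shap$ a mixture of three components, and the coverage-like term $m_{\mathrm{cut}}(S)/3$ can easily dominate $m_{\mathrm{in}}(S)/2$: for instance on a balanced bipartite graph $K_{\ell,\ell}$ with $k = 2k_0$, choosing one whole side gives $m_{\mathrm{in}} = 0$ but maximal $m_{\mathrm{cut}}$ and strictly beats the DKS-optimal $k_0 + k_0$ split under the raw $\shap$-score. Hence the reduction cannot be purely ``identity'' and must either (i) amplify the preference for ``both endpoints in $S$'' by a vertex-blow-up gadget that replaces each $v \in V$ by $L$ copies and makes $r_e$ adjacent to all $2L$ copies, so that the contribution of a half-covered edge decays like $1/(L+2)$ while the fully-covered one stays at $1/2$; (ii) enumerate and post-process multiple candidate \DKS sets extracted from~$S$; or (iii) combine both. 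Designing this modification carefully so that the concentrated-vs-spread tradeoff in \MGSS is controlled---and in particular the loss accumulated when rounding a possibly non-concentrated $\alpha$-approximate MGSS solution to a vertex set of $G$---is the technical heart of the argument, and the worst-case accounting of these losses is what gives the constant~$8$ in the theorem statement.
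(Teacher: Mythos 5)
Your plan is pointed in the right direction: the paper's reduction is precisely your option (i). Each vertex $v$ is replaced by $t$ copies, each edge-gadget is made adjacent to all $2t$ copies of its two endpoints, and the budget becomes $kt$, so that an edge with only one endpoint fully selected contributes $1/(t+2)$ per edge-gadget while a fully covered edge contributes $1/2$. Your closed form for the naive one-copy gadget and the $K_{\ell,\ell}$ example showing why it fails are both correct. However, the proposal stops exactly at what you yourself call the technical heart, and two concrete gaps remain. First, the vertex blow-up \emph{alone} does not work: after replacing each $v$ by $L$ copies and raising the budget to $kL$, the singleton Node-RR-sets contribute an additive $kL$ to $\shap(S)$, which grows linearly in $L$, while the edge term stays bounded by $|E|/2$; for large $L$ the objective is again dominated by a term carrying no information about $|E[S]|$, and for small $L$ your bipartite counterexample persists. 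The paper escapes this by \emph{also} replicating each edge-node $\ell=(2t+1)t|V|+1$ times, which forces $\shap_E(S)\ge\shap_V(S)$ and hence $\shap_E(S)\ge\shap(S)/2$ for every $S\subseteq \overline{V}_V$ (this is where one factor of $2$ in the constant $8$ originates). Your sketch never introduces this second replication, so the step ``absorb the additive $k$'' has no basis as written.

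Second, the conversion of a possibly non-concentrated $\alpha$-approximate \MGSS solution into a vertex set of $G$ is asserted but not carried out, and it is not routine. The paper handles it with an iterative exchange argument: repeatedly move one copy from the partially selected vertex with the smaller marginal gain $I_E$ to the one with the larger, and show, using that $d(x)=1/(x(x+1))$ is decreasing, that each exchange never decreases $\shap_E$ and preserves the ordering of the two marginals; a final cleanup then yields a ``thorough'' set in which every touched vertex of $G$ has all $t$ copies selected and at least one induced edge exists. Only for thorough sets does the upper bound $\shap_E(S'')\le \ell\,|E[U_{S''}]|$ hold, and the remaining accounting ($\ell/2$ per induced edge in the lower bound versus $\ell$ in the upper bound, the factor $2$ from $\shap_E\ge\shap/2$, and a final $1/2$ for disconnected $G$) is what produces $\alpha/8$. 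As it stands, your proposal identifies the right gadget and correctly diagnoses the obstacles, but it proves neither the domination of the edge term nor the rounding step, so the theorem does not yet follow from it.
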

A number of strong hardness of approximation results are known for \DKS. We review some of them:
(1) \DKS cannot be approximated within $1/n^{o(1)}$ if the Gap Exponential Time Hypothesis (Gap-ETH) holds~\cite{manurangsi2017almost}. (2) \DKS cannot be approximated within any constant if the Unique Games with Small Set Expansion conjecture holds~\cite{raghavendra2010graph}. (3) \DKS cannot be approximated within $n^{-(\log\log n)^{-c}}$ for some constant $c$ if the Exponential Time Hypothesis holds~\cite{manurangsi2017almost}. 
Using the reduction given in this section, we obtain the same hardness results also for \MGSS. In particular, we would like to stress that, according to (1) and our reduction, it is unlikely to find anything better than an $(n^{-c})$-approximation for \MGSS, where $c$ is a constant. Furthermore, for all settings where $k= O(n^{c})$, such an algorithm is implied by our result in Section~\ref{sec: approximation algorithm}. We proceed by formally defining \DKS.
\begin{cproblem}{\DKS}
Input: Undirected graph \(G=(V, E)\), integer \(k\).

Find: set \(T \subseteq V\) with \(|T| \le k\), s.t.\ \(|E[T]|\) is maximum. Here $E[T]$ are the edges induced by $T$, i.e.\ $E[T]:=\{e \in E: e \subseteq T \}$.
\end{cproblem}
    
\paragraph{The reduction.} Let us fix an an instance \(\PPP=(G=(V,E),k)\) of \DKS. Note that, w.l.o.g., we can assume that \(G\) is connected.\footnote{It is not hard to show that from an \(\alpha\)-approximation algorithm for connected graphs, we can obtain an $(\alpha/2)$-approximation algorithm for general graphs by making the graph connected and then applying the approximation algorithm as follows.
Let us assume that $G$ is disconnected, and there are $\nu$ connected components, we construct a graph $\hat{G}=(V,\hat{E})$ by adding  $\nu-1$ edges to $E$ in order to make the graph connected.
Let $\hat{T}$ be a solution returned by an approximation algorithm for connected graphs on $\hat{G}$. If $\hat{T}$ contains both nodes at the endpoints of an edge in $\hat{E}\setminus E$, we can compute a solution $T$ for $G$ by iteratively substituting each pair of nodes in $\hat{T}$ that do not induce an edge in $G$ with two nodes that are adjacent in $G$.
It is easy to see that $|E[T]|$ is at least $\frac{1}{2}|\hat{E}[\hat{T}]|$, where $\hat{E}[\hat{T}]$ are the edges induced by $\hat{T}$ in $\hat{G}$: for each edge $e=(u,v)$ in $\hat{E}[\hat{T}]\setminus E$ either there exists an edge in $E[T]$ that is incident to $u$ or $v$, or $u$ and $v$ are substituted with two nodes that are adjacent in $G$.
Moreover, $|\hat{E}[\hat{T}]|\geq\alpha |\hat{E}[\hat{T}^*]|\geq\alpha |E[T^*]|$, where $\hat{T}^*$ and $T^*$ are optimal solution for $G$ and $\hat{G}$, respectively.
}
From \(\PPP\), we create the following \MGSS instance \(\overline \PPP=(\overline{G}=(\overline{V},\overline{A}),\{p_a\}_{a\in \overline{A}}, \overline{k})\): (1)~Probabilities \(p_a\) are set to 1 for all $a\in \overline{A}$. (2)~We set the budget to \(\overline{k} := k\cdot t\), where $t := 6|E|$. (3)~The node set is defined as \( \overline{V} := \overline{V}_V \cup \overline{V}_E\), where 
    $\overline{V}_V := \{u^v_1,\ldots,u^v_t|v \in V\}$
 and 
    $\overline{V}_E := \{u^e_1,\ldots,u^e_{\ell}|e\in E\}$ for $\ell := (2t+1)t|V|+1$. 
(4)~The arc set \(\overline{A}\) is defined as follows: for each edge \(\{v,v'\}\in E\), we create a pattern in \(\overline{G}\) as described in Figure~\ref{fig:edge}. This pattern is composed of three layers. Two layers, called $v$-layer and $v'$-layer, gathering all the nodes in $\{u^v_1,\ldots,u^v_t\}$ and $\{u^{v'}_1,\ldots,u^{v'}_t\}$ respectively and one layer called, $\{v,v'\}$-layer, gathering all nodes in $\{u^{\{v,v'\}}_1,\ldots,u^{\{v,v'\}}_{\ell}\}$. There is an arc from each node of the first two layers to each node of the third layer.

\begin{figure}[h!]
  \centering{
\scalebox{1}{
\begin{tikzpicture}[scale=.7,->,>=stealth',shorten >=1pt,auto,semithick]
        \node [draw] (V1)    at (0, 4)  {$u^v_1$};
        \node [draw] (V2)    at (1.5, 4) {$u^v_2$};
        \node (Dots1)        at (3, 4)  {$\cdots$};
        \node [draw] (VT)    at (4.5, 4)  {$u^v_t$};
        \node  (empty)    at (8, 4)  {$v$-layer};
        
        \node [draw] (VP1)   at (0, 0)  {$u^{v'}_1$};
        \node [draw] (VP2)   at (1.5, 0) {$u^{v'}_2$};
        \node (Dots3)        at (3, 0)  {$\cdots$};
        \node [draw] (VPT)   at (4.5, 0)  {$u^{v'}_t$};
        \node  (empty)    at (8, 0)  {$v'$-layer};
        
        \node [draw] (E1)    at (-3, 2)    {$u^{\{v,v'\}}_1$};
        \node [draw] (E2)    at (0, 2)  {$u^{\{v,v'\}}_2$};
        \node (Dots2)        at (2.25, 2) {$\cdots$};
        \node [draw] (ELM1)  at (4.5, 2) {$u^{\{v,v'\}}_{\ell-1}$};
        \node [draw] (EL)    at (7.5, 2) {$u^{\{v,v'\}}_\ell$};
        \node  (empty)    at (10.7, 2)  {$\{v,v'\}$-layer};

        \path [draw = black, rounded corners, inner sep=100pt,dotted]  
               (-1, 4.7) 
            -- (10.5,4.7)
	        -- (10.5,3.3)	 
            -- (-1,3.3)
            -- cycle;
        
        \path [draw = black, rounded corners, inner sep=100pt,dotted]  
               (-1, 0.7) 
            -- (10.5,0.7)
	        -- (10.5,-.7)	 
            -- (-1,-.7)
            -- cycle;
            
        \path [draw = black, rounded corners, inner sep=100pt,dotted]  
               (-4.5, 2.8) 
            -- (12.7,2.8)
	        -- (12.7,1.2)	 
            -- (-4.5,1.2)
            -- cycle;
        
        \path (V1) edge [] node {} (E1)
              (V1) edge [] node {} (E2)
              (V1) edge [] node {} (ELM1)
              (V1) edge [] node {} (EL)
              
              (V2) edge [] node {} (E1)
              (V2) edge [] node {} (E2)
              (V2) edge [] node {} (ELM1)
              (V2) edge [] node {} (EL)

              (VT) edge [] node {} (E1)
              (VT) edge [] node {} (E2)
              (VT) edge [] node {} (ELM1)
              (VT) edge [] node {} (EL)
              
              (VP1) edge [] node {} (E1)
              (VP1) edge [] node {} (E2)
              (VP1) edge [] node {} (ELM1)
              (VP1) edge [] node {} (EL)
              
              (VP2) edge [] node {} (E1)
              (VP2) edge [] node {} (E2)
              (VP2) edge [] node {} (ELM1)
              (VP2) edge [] node {} (EL)

              (VPT) edge [] node {} (E1)
              (VPT) edge [] node {} (E2)
              (VPT) edge [] node {} (ELM1)
              (VPT) edge [] node {} (EL);
      \end{tikzpicture}
  }
}
\caption{Pattern obtained in \(\overline{G}\) for each edge \(\{v,v'\} \in E\).}
\label{fig:edge}
\end{figure}
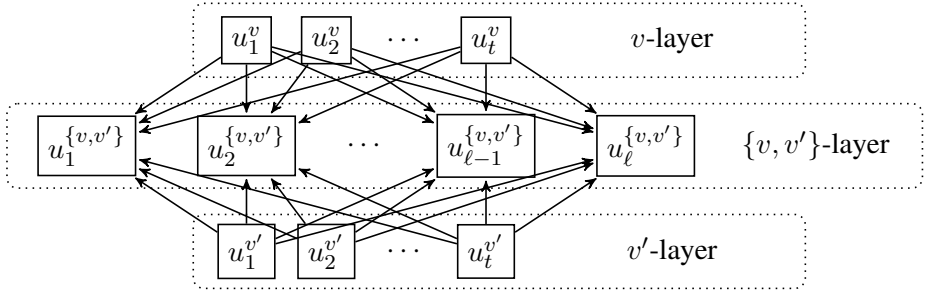

\paragraph{RR sets for $\overline \PPP$.}
Interestingly, all RR sets that can be generated in \(\overline \PPP\) are easily described as two different types:
(1) RR-sets that consist of singletons \(\{u^v_p\}\), we call them \emph{Node-RR-sets} and denote the set of all Node-RR-sets by \(\RRR_V\). There is exactly one Node-RR-set per node \(u^v_p \in \overline{V}_V\).
(2) RR-sets of the form \(\{u^{\{v,v'\}}_p, u^v_1,\ldots,u^v_t,u^{v'}_1,\ldots,u^{v'}_t\}\), we call them \emph{Edge-RR-sets} and denote the set of all Edge-RR-sets by \(\RRR_E\). There is exactly one Edge-RR-set per node \(u^{\{v,v'\}}_p \in \overline{V}_E\).
We note that each RR-set occurs with the same probability \(1/\overline{V}\). Hence, the IGS centrality of a set \(S\subseteq \overline{V}\), can be written as \(\shap(S) = \shap_V(S) + \shap_E(S)\), where \( \shap_V(S) := \sum_{R \in \RRR_V} \frac{\ones_{R \cap S \neq \emptyset}}{|R\setminus S| + 1}\) and \(\shap_E(S) := \sum_{R \in \RRR_E} \frac{\ones_{R \cap S \neq \emptyset}}{|R\setminus S| + 1} \). 

We proceed with a simple observation. For a node \(u \in \overline{V}\) and a set \(S \subseteq \overline{V}\setminus\{u\}\), we denote by \(I(u,S) := \shap(S \cup \{u\}) - \shap(S)\) the increase in IGS centrality obtained from adding $u$ to $S$. 
\begin{observation}
    Let \(u\in\overline{V}\) and \(S \subseteq \overline{V}\setminus\{u\}\). If \(u \in \overline{V}_V\), then \(I(u,S) \ge 1\). 
    If \(u \in \overline{V}_E\), then \(I(u,S) \le 1/2\). 
\end{observation}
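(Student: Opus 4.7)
The plan is to write $I(u,S) = \sum_R \Delta_R(u,S)$, where the sum is over all RR-sets of $\overline{\PPP}$ and
\[
\Delta_R(u,S) \;:=\; \frac{\ones_{R\cap (S\cup\{u\})\neq\emptyset}}{|R\setminus(S\cup\{u\})|+1} \;-\; \frac{\ones_{R\cap S\neq\emptyset}}{|R\setminus S|+1}.
\]
This decomposition is justified by Lemma~\ref{lem: shapley value identity} together with the fact that in $\overline{\PPP}$ all arcs have probability one, so each of the $|\overline{V}|$ RR-sets is generated with probability $1/|\overline{V}|$ (one per possible root), cancelling the leading factor of $n=|\overline{V}|$. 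The key trivial step is that $\Delta_R(u,S)=0$ whenever $u\notin R$, so one only has to control the finite set of RR-sets that contain $u$.

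For the case $u\in\overline{V}_V$, write $u=u^v_p$. Then $u$ is contained in (i) a single Node-RR-set, namely $\{u\}$ itself, and (ii) the Edge-RR-sets $R_{e,q}$ corresponding to edges $e\in E$ incident to $v$. The Node-RR-set contributes exactly $\Delta_{\{u\}}(u,S)=1-0=1$. For each relevant Edge-RR-set, I would parametrize $a := |R\cap S|$ and compute
\[
\Delta_R(u,S) \;=\; \frac{1}{2t+1-a} - \frac{\ones_{a>0}}{2t+2-a},
\]
which is $1/(2t+1)$ if $a=0$ and $\tfrac{1}{(2t+1-a)(2t+2-a)}$ if $a>0$; in both cases this is non-negative. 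Summing, $I(u,S)\ge 1$.

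For the case $u\in\overline{V}_E$, write $u=u^e_p$ with $e=\{v,v'\}$. Here $u$ lies in exactly one RR-set, the Edge-RR-set $R := R_{e,p}$ of size $2t+1$. Since $u\in R\setminus S$, the parameter $a := |R\cap S|$ ranges over $\{0,1,\dots,2t\}$, and the same computation gives $\Delta_R(u,S) = 1/(2t+1)$ for $a=0$ and $\Delta_R(u,S) = \tfrac{1}{(2t+1-a)(2t+2-a)}$ for $a\ge 1$. The function $a\mapsto 1/((2t+1-a)(2t+2-a))$ is increasing on $\{1,\dots,2t\}$ and attains its maximum $1/(1\cdot 2)=1/2$ at $a=2t$, so $I(u,S)=\Delta_R(u,S)\le 1/2$.

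There is no real obstacle here: the whole argument is a case analysis on which type of RR-set contains the added element, followed by elementary algebra; the only point that requires a touch of care is the parametrization $a=|R\cap S|$ together with the observation $a\le 2t$ in the second case (because $u$ itself occupies one element of $R\setminus S$), which is exactly what makes the tight bound $1/2$ — rather than $1$ — come out.
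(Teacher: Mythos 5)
Your proof is correct and follows essentially the same route as the paper's: the contribution of the singleton Node-RR-set $\{u\}$ gives the lower bound of $1$ (with all other contributions non-negative), and the fact that a node of $\overline{V}_E$ lies in exactly one Edge-RR-set of size $2t+1$ gives the upper bound of $1/2$. The paper states this only as a brief sketch; your explicit parametrization by $a=|R\cap S|$ simply fills in the elementary algebra it omits.
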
 
The proof of the above observation is simple. The first part holds due to the contribution of the Node-RR-set corresponding to \(u\) itself, the second part holds as \(u\) is in only one Edge-RR-set $R$ of size $2t+1$, thus 
$\frac{\ones_{R \cap (S\cup\{u\}) \neq \emptyset}}{|R\setminus (S\cup\{u\})| + 1} - \frac{\ones_{R \cap S \neq \emptyset}}{|R\setminus S| + 1} \le \frac{1}{2}$. 
Hence, we can say that a ``reasonable'' solution for \(\overline \PPP\) only contains nodes from \(\overline{V}_V\). 

\paragraph{Thorough Sets.}
We introduce some notation. For \(S\subseteq \overline{V}_V\) and \(v \in V\), we define \(\nb{S}{v}:=|S\cap\{u_1^v,\ldots,u_t^v\}|\), \(U_S:=\{v \in V : \nb{S}{v}\ge 1\}\), and \(U_S^{(1,t)} := \{v\in U_S : \nb{S}{v} \in (1,t) \}\). Note that $U_S, U_S^{(1,t)} \subseteq V$. 
The following notion is central to our analysis.
\begin{definition}
    We call a set $S\subseteq \overline V_V$ to be \emph{thorough}, if (1) \(\nb{S''}{v}=t\) for all $v\in U_{S}$ and (2) \(|E[U_{S}]|\ge 1\).
\end{definition}

Now, let \(S^*\) and \(T^*\) be optimal solutions for $\overline\PPP$ and $\PPP$, respectively. 
We get the following lemma, part (3) of which shows how to transform a thorough set $S$ into a solution of $\PPP$ in a straightforward way.
\begin{lemma} \label{lemma: hardness inequalities} 
    It holds that 
    (1)~\(\shap(S^*) \ge \frac{\ell}{2} |E[T^*]|\), 
    (2)~for all \( S\subseteq \overline{V}_V\), we have \(\shap_E(S) \ge \shap(S)/2\), and 
    (3)~if \( S\subseteq \overline{V}_V\) is thorough, we have \(\shap_E(S)\le \ell|E[U_S]|\).
\end{lemma}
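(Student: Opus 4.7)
The plan is to prove the three parts separately, leveraging the structural description of the Node- and Edge-RR-sets in $\overline\PPP$. Since every edge probability equals $1$, the $\ell$ Edge-RR-sets associated with $e=\{v,v'\} \in E$ all have the form $\{u_p^{\{v,v'\}}, u_1^v,\ldots,u_t^{v'}\}$, so for any $S \subseteq \overline V_V$ each contributes $\ones_{a_e+b_e\ge 1}/(2t+2-a_e-b_e)$ to $\shap_E(S)$, where $a_e := \nb{S}{v}$ and $b_e := \nb{S}{v'}$. All three parts reduce to arithmetic in this formula, together with the observation that each Node-RR-set is a singleton and hence $\shap_V(S) = |S|$ for every $S \subseteq \overline V_V$.

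For part~(1), I would exhibit the feasible set $\hat S := \bigcup_{v \in T^*}\{u_1^v,\ldots,u_t^v\}$, which has size at most $|T^*|\cdot t \le \overline k$. For each $e = \{v,v'\} \in E[T^*]$, both layers are contained in $\hat S$, so $a_e + b_e = 2t$ and each of the $\ell$ associated Edge-RR-sets contributes $1/2$. Summing over $E[T^*]$ gives $\shap_E(\hat S) \ge (\ell/2)|E[T^*]|$, and since $\shap(S^*) \ge \shap(\hat S) \ge \shap_E(\hat S)$, part~(1) follows.

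For part~(2), it is enough to show $\shap_E(S) \ge \shap_V(S) = |S|$. Writing $E_S := \{e \in E : e \cap U_S \ne \emptyset\}$, every $e \in E_S$ has $a_e + b_e \ge 1$ and so contributes at least $\ell/(2t+1)$ to $\shap_E(S)$. Because $G$ is connected and $|V| \ge 2$, every $v \in U_S$ has $\deg_G(v) \ge 1$, so $\sum_{v \in U_S}\deg_G(v) \le 2|E_S|$ yields $|E_S| \ge |U_S|/2$. Combining this with $|S| \le t\,|U_S|$ and the inequality $\ell = (2t+1)t|V| + 1 \ge 2t(2t+1)$ (valid since $|V|\ge 2$), I would obtain
\[
  \shap_E(S) \;\ge\; \frac{\ell\,|U_S|}{2(2t+1)} \;\ge\; \frac{\ell\,|S|}{2t(2t+1)} \;\ge\; |S|.
\]
This is the step I expect to be the main obstacle, as it is the only one that crucially uses both the connectedness of $G$ and the specific magnitude of $\ell$.

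For part~(3), thoroughness lets me compute $\shap_E(S)$ exactly: every $e \in E[U_S]$ has $a_e+b_e = 2t$ and contributes $\ell/2$, every edge with exactly one endpoint in $U_S$ has $a_e+b_e = t$ and contributes $\ell/(t+2)$, and the remaining edges contribute $0$. Letting $E_\partial$ denote the boundary edges, the bound $|E_\partial| \le |E| = t/6$ yields a boundary contribution of at most $\ell t/(6(t+2)) \le \ell/6$, which is dominated by $(\ell/2)|E[U_S]|$ since $|E[U_S]| \ge 1$. Summing gives $\shap_E(S) \le (\ell/2)|E[U_S]| + (\ell/2)|E[U_S]| = \ell\,|E[U_S]|$.
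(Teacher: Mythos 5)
Your proof is correct and follows essentially the same route as the paper: the same witness set $\bigcup_{v\in T^*}\{u^v_i\}_{i\in[t]}$ for part~(1), and the same split into edges with one versus two endpoints in $U_S$ (with the bound $|E|\le t/6$ absorbing the boundary term) for part~(3). For part~(2) the paper also reduces to showing $\shap_E(S)\ge\shap_V(S)$ via connectivity and the magnitude of $\ell$; it just uses the cruder bounds $\shap_V(S)\le t|V|$ and $\shap_E(S)\ge \ell/(2t+1)$ with $\ell>(2t+1)t|V|$, whereas you track $|S|$, $|U_S|$ and $|E_S|$ exactly --- same mechanism, different bookkeeping.
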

\begin{proof}
    For~(1), we let \(S_{T^*} := \bigcup_{v\in T^*} \{u^v_i\}_{i\in [t]}\) and get \(\shap(S^*) \ge \shap(S_{T^*}) \ge \shap_E(S_{T^*})\). As each edge in $E[T^*]$ induces \(l\) Edge-RR-sets $R$ with $|R \setminus S_{T^*}| = 1$, we have \(\shap_E(S_{T^*}) \ge \frac{\ell}{2} |E[T^*]|\). 

    For~(2), fix \(S \subseteq \overline{V}_V\). As $G$ is connected, for any \(u\in \overline{V}_V\), there are at least $l$ Edge-RR-sets containing $u$. Hence, \(\shap_E(S)\ge \frac{\ell}{2t+1}\). On the other hand, note that \(\shap_V(S)\le t|V|\). As \(\ell > (2t+1)t|V|\), we get \(\shap_E(S) \ge \shap_V(S)\) and hence \(\shap_E(S) \ge \frac{\shap(S)}{2}\).

    For~(3), let \(S\subseteq \overline{V}_V\) be a thorough set. Define \(E_i := \{e\in E: e\cap U_S = i\}\) for $i=1,2$. Clearly, \(E[U_S]=E_2\) and, for each edge in \(E_1\) (resp.\ \(E_2\)) there are exactly \(\ell\) Edge-RR-sets \(R\) with \(|R\setminus S| = t + 1\) (resp.\ \(|R\setminus S| = 1\)). Hence, we conclude that
    \(
        \shap_E(S) 
        = \frac{\ell|E_2|}{2} + \frac{\ell|E_1|}{t+2}
        \le \ell |E_2|
    \)
    by the choice of $t$.
\end{proof}

In Lemmata~\ref{lemma: key hardness 1} and~\ref{lemma: key hardness 2} below, we show that every solution \(S\subseteq \overline{V}_V\) of $\overline \PPP$ can be transformed in polynomial time into a feasible thorough set $S''$ with \(\shap_E(S'') \ge \shap_E(S)\). This allows us to prove Theorem~\ref{thm: GS hardness}.
\begin{proof}[Proof of Theorem~\ref{thm: GS hardness}]
    Let \(S\) be an $\alpha$ approximate solution for $\overline \PPP$. We can assume, w.l.o.g., that \(S\subseteq \overline{V}_V\). Using (2) and (1) of Lemma \ref{lemma: hardness inequalities}, we get
    \(
       \shap_E(S) \ge \frac{\shap(S)}{2} \ge \frac{\alpha}{2} \shap(S^*) \ge \frac{\alpha \ell}{4}  |E[T^*]|.
    \)
    We now apply Lemmata~\ref{lemma: key hardness 1} and~\ref{lemma: key hardness 2} to the set \(S\), obtaining a thorough set \(S''\). Together with Lemma~\ref{lemma: hardness inequalities}~(3), we get
    $
       \shap_E(S) \le \shap_E(S'') \le \ell |E[U_{S''}]|.  
    $
    Thus, $S''$ is an $\frac{\alpha}{4}$-approximaion for $\PPP$. If $G$ is disconnected, this adds an extra $1/2$ to the approximation ratio. This concludes the proof.
\end{proof}

\paragraph{Transforming \(S\subseteq \overline{V}_V\) into a Thorough Set.}
Let \(S\subseteq \overline{V}_V\) be a set of size $\overline{k}$. We transform $S$ into a thorough set in two steps, the first of which is the following \emph{iterative process} that computes a set \(S'\) with \(|S'| = |S|\) by constructing a sequence \(S_0, \ldots, S_\mu\) with \(S_0=S\) and \(S_\mu=S'\). For a node $v\in V$, let \(I_E(v,S) := \shap_E(S_v) - \shap_E(S)\) where $S_v$ is the set obtained from $S$ by increasing \(\nb{S}{v}\) by one.  
While \(U_{S_i}^{(1,t)}\) contains at least two nodes \(v_h\) and \(v_l\) with \(I_E(v_h,S_i) \ge I_E(v_l,S_i)\), 
obtain \(S_{i+1}\) from \(S_i\) by increasing (resp.\ decreasing) \(\nb{S_i}{v_h}\) (resp.\ \(\nb{S_i}{v_l}\)) by one until one of the two nodes is not in \(U_{S_i}^{(1,t)}\). 
At the end, \(|U_{S_i}^{(1,t)}|\le 1\) and, moreover, the process terminates in polynomial time, since after at most $t$ iterations, one node is removed from \(U_{S_i}^{(1,t)}\).
\begin{lemma} \label{lemma: key hardness 1}
    The set \(S'\) satisfies \(\shap_E(S') \ge \shap_E(S)\).
\end{lemma}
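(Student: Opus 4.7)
The plan is to analyze one ``inner iteration'' of the process (with $v_h, v_l$ fixed) in isolation, show that $\shap_E$ does not decrease over such an iteration, and then sum over all iterations. The starting point is the edge decomposition
\[
\shap_E(S) = \ell \sum_{\{u,v\} \in E} g(\nb{S}{u}, \nb{S}{v}), \qquad g(a, b) := \tfrac{1}{2t+2-a-b} \text{ if } a+b \ge 1, \quad g(0,0) := 0,
\]
which holds because each of the $\ell$ Edge-RR-sets associated to $\{u,v\} \in E$ has size $2t+1$ and contains exactly $\nb{S}{u} + \nb{S}{v}$ elements of any $S \subseteq \overline{V}_V$.

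Fix an inner iteration with $a := \nb{S_i}{v_h}$, $b := \nb{S_i}{v_l}$, and let $f(j)$ denote $\shap_E$ after $j$ swaps within that iteration. Only edges of $G$ incident to $v_h$ or $v_l$ contribute $j$-dependent terms to $f$; moreover, if $\{v_h, v_l\} \in E$, its contribution $\ell g(a+j, b-j) = \ell/(2t+2-a-b)$ is constant in $j$ (using $a+j, b-j \ge 1$ throughout). Thus $f$ is, up to an additive constant, a sum of terms $\ell g(a+j, \nb{S_i}{v'})$ over neighbors $v' \neq v_l$ of $v_h$ in $G$ and $\ell g(b-j, \nb{S_i}{v''})$ over neighbors $v'' \neq v_h$ of $v_l$ in $G$. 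A short computation shows that $g(\cdot, d)$ has positive second differences on the positive integers (equal to $\tfrac{2}{y(y-1)(y+1)}$ with $y = 2t+2-x-d$), and since $a+j, b-j$ remain $\ge 2$ at all interior points of the iteration, $f$ is convex in $j$. It therefore suffices to prove $f(1) \ge f(0)$, as convexity then gives $f(j+1) - f(j) \ge f(1) - f(0) \ge 0$ for all $j \ge 0$ and hence $f(m) \ge f(0)$.

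To show $f(1) \ge f(0)$, I expand $f(1) - f(0)$ and $I_E(v_h, S_i) - I_E(v_l, S_i)$ in parallel. Let $A$ collect the $v_h$-marginal increments $g(a+1, \nb{S_i}{v'}) - g(a, \nb{S_i}{v'})$ over neighbors $v' \neq v_l$ of $v_h$, and let $D'$ (resp.\ $D''$) collect the $v_l$-increments $g(b, \nb{S_i}{v''}) - g(b-1, \nb{S_i}{v''})$ (resp.\ $g(b+1, \nb{S_i}{v''}) - g(b, \nb{S_i}{v''})$) over neighbors $v'' \neq v_h$ of $v_l$. A direct calculation yields
\[
  f(1) - f(0) = A - D' \quad \text{and} \quad I_E(v_h, S_i) - I_E(v_l, S_i) = A - D'',
\]
where the edge $\{v_h, v_l\}$ (if present) cancels from the first expansion because $g(a+1, b-1) = g(a, b)$ and from the second by the symmetry $g(a+1, b) = g(a, b+1)$. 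Convexity of $g(\cdot, d)$ gives $D' \le D''$ termwise, so $f(1) - f(0) = A - D' \ge A - D'' \ge 0$, the last inequality being the algorithm's selection criterion. The delicate point is precisely this comparison: the two expansions cancel the $\{v_h, v_l\}$-edge via distinct identities (constancy of $g$ along the swap ray vs.\ its symmetry), and convexity of $g$ is what bridges the remaining discrepancy between $D'$ and $D''$.
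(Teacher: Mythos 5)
Your proof is correct, and it organizes the argument differently from the paper's. The paper works swap by swap with the per-RR-set increment $d(x)=\tfrac{1}{x(x+1)}$: it shows that the selection criterion $I_E(v_h,S_i)\ge I_E(v_l,S_i)$ forces $\shap_E(S_{i+1})\ge\shap_E(S_i)$, and then — since the criterion is only checked once per inner iteration — it must additionally prove that the criterion is \emph{preserved} after each swap (both steps via monotonicity of $d$), so that the single-swap bound applies inductively. You instead pass to the per-edge closed form $g(a,b)=\tfrac{1}{2t+2-a-b}$, prove that the trajectory $f(j)$ is convex in the number of swaps $j$, and thereby reduce the entire inner iteration to the single inequality $f(1)\ge f(0)$, which follows from the selection criterion together with $D'\le D''$ — the exact analogue of the paper's $d(|R\setminus S_i|+1)\le d(|R\setminus S_i|)$. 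Both arguments ultimately rest on the same fact (convexity of $y\mapsto 1/y$, equivalently monotonicity of $d$), but your convexity-of-the-trajectory device replaces the paper's invariant-propagation step and localizes all the bookkeeping about the edge $\{v_h,v_l\}$ (cancelling once by constancy of $g$ along the swap ray, once by symmetry of $g$) in one place. You were also right to worry about the boundary case $g(0,0)=0$, where $g$ fails to be convex; your observation that $\nb{S_i}{v_l}$ never drops below $1$ and $\nb{S_i}{v_h}$ never exceeds $t$ within an inner iteration correctly disposes of it, since all second differences you use are then evaluated at arguments where the smooth formula applies.
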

\begin{proof}
    Let \(\Delta_i:=\shap_E(S_{i+1}) - \shap_E(S_{i})\) for every $i$ and $d(x) := \frac{1}{x(x+1)}$ and note that $d$ is decreasing in $x$. We show that \(I_E(v_h,S_{i})\ge I_E(v_l,S_i)\) implies \(\Delta_i\ge 0\) and \(I_E(v_h,S_{i+1})\ge I_E(v_l,S_{i+1})\).
    For a node \(v\in V\), let \(\RRR(v) := \{R\in \RRR_E : R\cap\{u^v_1,\ldots,u_t^v\}\neq \emptyset\}\) be all Edge-RR-sets that contain the nodes from $\overline{V}_V$ corresponding to $v$. 
    Then, we can rewrite \(I_E(v,S_i)\) and \(\Delta_i\) as
    \begin{align*}
      I_E(v,S_i) &= \sum_{R \in \RRR(v)} d(|R \setminus S_i|),\\    
      \Delta_i &= \!\!\!\!\!\!\!\!\sum_{R \in \RRR(v_h) \setminus \RRR(v_l)} \!\!\!\!\!\!\!\! d(|R\setminus S_{i}|)- \!\!\!\!\!\!\!\!\sum_{R \in \RRR(v_l)\setminus\RRR(v_h)} \!\!\!\!\!\!\!\! d(|R\setminus S_{i}|+1),
    \end{align*}
    As $d$ is decreasing,
    we have that 
    \begin{align*}
      \Delta_i
      \ge\!\!\!\!\!\!\!\!\sum_{R \in \RRR(v_h) \setminus \RRR(v_l)} \!\!\!\!\!\!\!\! d(|R\setminus S_{i}|)- \!\!\!\!\!\!\!\!\sum_{R \in \RRR(v_l)\setminus\RRR(v_h)} \!\!\!\!\!\!\!\! d(|R\setminus S_{i}|)
    \end{align*}
    which equals $I_E(v_h,S_i) - I_E(v_l,S_i)$. The latter is non-negative by choice of $v_h,v_l$.
    We turn to showing 
    \(I_E(v_h,S_{i+1})\ge I_E(v_l,S_{i+1})\). We have that $I_E(v_h,S_{i+1})=\sum_{R \in \RRR(v_h)}d(|R \setminus S_{i+1}|)$ equals
    \begin{align*}
        \!\!\!\!\!\!\!\!\sum_{R \in \RRR(v_h)\setminus \RRR(v_l)} \!\!\!\!\!\!\!\!\!d(|R \setminus S_{i}| - 1) + \!\!\!\!\!\!\!\!\!\sum_{R \in \RRR(v_h)\cap\RRR(v_l)} \!\!\!\!\!d(|R \setminus S_{i}|),
    \end{align*}
    which is at least $\sum_{R \in \RRR(v_h)} d(|R \setminus S_{i}|)=I_E(v_h,S_{i})$ as $d$ is decreasing. Similarly, $I_E(v_l,S_{i+1}) = \sum_{R \in \RRR(v_l)} d(|R \setminus S_{i+1}|)$ equals
    \begin{align*}
        \!\!\!\!\!\!\!\!\sum_{R \in \RRR(v_l)\setminus \RRR(v_h)} \!\!\!\!\!\!\!\!\!d(|R \setminus S_{i}| + 1) + \!\!\!\!\!\!\!\!\!\sum_{R \in \RRR(v_l)\cap \RRR(v_h)}\!\!\!\!\! d(|R \setminus S_{i}|),
    \end{align*}
    which is at most $\sum_{R \in \RRR(v_l)} d(|R \setminus S_{i}|) = I_E(v_l,S_{i})$.
    Hence, \(I_E(v_h,S_{i+1})\ge I_E(v_h,S_{i}) \ge I_E(v_h,S_{i}) \ge  I_E(v_l,S_{i+1})\).
\end{proof}

\begin{lemma}\label{lemma: key hardness 2}
    The set \(S'\) can be transformed into a thorough set $S''$ with \(|S''| \le |S'|\) and \(\shap_E(S'') \ge \shap_E(S')\).
\end{lemma}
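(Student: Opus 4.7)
The plan is to transform $S'$ into a thorough set $S''$ through a sequence of modifications, each of which preserves the cardinality bound $|S|\le|S'|$ and does not decrease $\shap_E$. There are two main substeps -- first eliminating the possibly existing intermediate-level vertex $v^* \in U_{S'}^{(1,t)}$, then consolidating the surviving single-copy vertices -- followed by a local adjustment to guarantee condition~(2) of thoroughness.

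For the first substep, let $a^* = \nb{S'}{v^*}$, and write $V_1(S) := \{v : \nb{S}{v} = 1\}$ and $V_t(S) := \{v : \nb{S}{v} = t\}$. If $|V_1(S')| \ge t - a^*$, I transfer one copy from each of $t - a^*$ distinct vertices of $V_1(S')$ onto $v^*$; this raises $\nb{\tilde S}{v^*}$ to $t$, demotes those $V_1$-vertices to $0$, and preserves the cardinality. Otherwise (and trivially if $v^*$ does not exist), I delete $v^*$'s copies; in this case $|V_1(S')| + a^* < t$, so the lost contribution of $v^*$ together with $V_1(S')$ is small relative to the preserved $V_t(S')$-contribution. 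The resulting set $\tilde S$ satisfies $\nb{\tilde S}{v} \in \{0, 1, t\}$ for every $v$; an edge-by-edge tally in the spirit of Lemma~\ref{lemma: key hardness 1} shows $\shap_E(\tilde S) \ge \shap_E(S')$, by comparing the denominator shrinkage on edges incident to $v^*$ (from $2t + 2 - a^* - \nb{S'}{w}$ down to $t + 2 - \nb{S'}{w}$) to the loss incurred on the demoted $V_1$-vertices (whose per-edge contribution has coefficient at most $\ell/(2t+1)$).

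For the second substep, I iteratively consolidate: while at least $t$ single-copy vertices remain in the current set, I pick $t$ of them and a vertex $w$ currently at $0$ copies (choosing $w$ adjacent to the current full-copy set whenever possible), delete the $t$ singles, and add $t$ copies on $w$. Leftover singles at the end are discarded. The cardinality is preserved (or decreased) throughout, and the resulting set $S''$ has $\nb{S''}{v} \in \{0, t\}$ for every $v$. For the $\shap_E$ non-decrease, the critical ingredient is $t = 6|E|$: the total potential loss from removing any batch of $t$ singles (and from the leftover discards) is bounded by $\ell|E|/(2t) = \ell/12$, while one new full-copy vertex $w$ adjacent to an existing full-copy vertex contributes a gain of at least $\ell(1/2 - 1/(t+2)) > \ell/3$. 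Finally, if the full-copy set $U$ of the resulting set still has $|E[U]| = 0$, connectivity of $G$ allows a local swap of some $u \in U$ with an adjacent $w \notin U$ that is itself adjacent to another member of $U$: this creates an induced edge inside $U$, preserves cardinality, and does not decrease $\shap_E$ by a symmetric accounting of the two vertices' edge contributions. The main obstacle is the per-edge bookkeeping of the first substep, where $a^* \ge 2$ ensures that each gain on a $v^*$-edge is of order $\Omega(1/t)$ and thus matches the scattered $V_1$-losses of order $O(1/t)$, and the choice $t = 6|E|$ keeps the global loss terms uniformly bounded by $O(\ell)$.
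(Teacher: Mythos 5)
Your first substep is where the argument breaks: transferring one copy from each of $t-a^*$ single-copy vertices onto $v^*$ is not a $\shap_E$-non-decreasing operation in general. Raising $v^*$ from $a^*$ to $t$ copies gains, on an edge $\{v^*,w\}$ of $G$, only $\ell\bigl(\tfrac{1}{t+2-\nb{S'}{w}}-\tfrac{1}{2t+2-a^*-\nb{S'}{w}}\bigr)$, which is $\Theta(\ell/t^2)$ per edge when $a^*$ is close to $t$ and $w$ is unselected (so your claim that $a^*\ge 2$ forces a per-edge gain of $\Omega(\ell/t)$ fails for large $a^*$; and even when the per-edge gain is $\Omega(\ell/t)$, the vertex $v^*$ may have degree $1$, so the total gain can be as small as $\Theta(\ell/|E|)$). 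Meanwhile a demoted single-copy vertex all of whose neighbours are unselected loses its entire contribution of $\ell/(2t+1)$ per incident edge, so the total loss can reach $\Theta(\ell|E|/t)=\Theta(\ell)$. Concretely, take $S'$ with $k-1$ vertices fully selected and already inducing an edge of $G$, $\nb{S'}{v^*}=t-3$ with $\deg_G(v^*)=1$ and its neighbour unselected, and three singles each adjacent only to unselected vertices. Your transfer loses about $3\ell/(2t)$ on the singles and gains only $3\ell/((t+2)(t+5))$ on $v^*$; your second substep then does nothing (no singles remain) and your final swap is not triggered (the full-copy set already induces an edge), so the output violates $\shap_E(S'')\ge\shap_E(S')$, which is exactly the inequality the reduction needs. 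The same objection applies to the ``otherwise delete $v^*$'' branch: it strictly decreases $\shap_E$ and nothing downstream is guaranteed to recoup the loss.

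The missing idea is that the only gain large enough to pay for zeroing out partial vertices is the \emph{completion of an edge}, i.e.\ making both endpoints of some edge of $G$ fully selected, which moves $\ell$ Edge-RR-sets to contribution $1/2$ each and yields a gain of at least $\ell/6$; merely topping up a vertex whose neighbours are unselected does not produce a comparable gain. The paper's proof therefore never argues local monotonicity of a transfer onto $v_r$: it uses the divisibility $\overline{k}=kt$ to show that the copies sitting on $v_r$ together with the singles amount to at least $t$, moves $t$ of them onto a single vertex chosen to be adjacent to an already fully selected vertex (which exists by connectivity of $G$), and charges the \emph{entire} loss, bounded globally by $\ell|E|/(t+1)\le\ell/6$, against that one guaranteed gain of $\ell/6$. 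Your second substep and final swap do contain this completed-edge idea, but because your first substep can already incur an unrecouped loss, the transformation as proposed does not prove the lemma.
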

\begin{proof}
    We start by treating a few trivial cases. (i) The set $S'$ is thorough. Then, we set $S'':=S'$. (ii)~Property (1) holds for $S'$, but $|E[U_{S'}]| = 0$ and (iii) \(\nb{S'}{v} < t\) for all nodes $v \in U_{S'}$. Recall that \(|U_{S'}^{(1,t)}| \le 1\) by construction of $S'$. Hence, (iii) implies that for all nodes $v\in U_{S'}$ but one, \(\nb{S'}{v} = 1\). In both cases (ii) and (iii), \(\shap_E(S')\le\ell|E|/(t+2)\), as \(|R\setminus S'| \ge t+1\) for each of the \(\ell|E|\) Edge-RR-sets $R$. Thus, by choosing any edge $e=\{v,v'\}$ and setting \(S'' := \{u^v_i,u^{v'}_i : i\in[t] \}\) we obtain a thorough set with $|S''|\le |S'|$ and \(\shap_E(S'') \ge \ell/2 \ge \ell|E|/(t+2) \ge \shap_E(S')\).

    If none of (i)-(iii) hold, we can order the nodes in \(U_{S'}\) such that there exists an index $r\in[|U_{S'}|]$ such that \(\nb{S'}{v_i} = t\) for $i\in[1, r-1]$, $\nb{S'}{v_{r}}\in [1, t)$, and \(\nb{S'}{v_{i}} = 1\) for $i\in [r+1, |U_{S'}|]$. Recall that $\overline{k}=kt$, thus \(\sum_{i=r}^{|U_{S'}|} \nb{S'}{v_i}\) is a (non-trivial) multiple of $t$. We conclude by distinguishing two more cases. If \(v_r\) is adjacent to one of \(\{v_i\}_{i\in[r-1]}\), we construct $S''$ from \(S'\) by setting $\nb{S'}{v_r} = t$ and $\nb{S'}{v_i} = 0$ for all $i\in [r+1, |U_{S'}|]$. Otherwise, we find an adjacent node \(v_q \in V\setminus\{v_i\}_{i\in[r-1]}\) (by connectivity of $G$ such node exists) and construct $S''$ from \(S'\) by setting $\nb{S'}{v_q} = t$ and $\nb{S'}{v_i} = 0$ for all \(i \in [r, |U_{S'}|]\).

    In both cases $S''$ is thorough and $|S''|\le |S'|$. It remains to show \(\shap_E(S'')\ge \shap_E(S')\).
    Indeed, in the first (resp.\ second) case \(v_r\) (resp.\ $v_q$) is in the neighborhood of \(\{v_i\}_{i\in[r-1]}\). By setting $\nb{S'}{v_r}$ (resp.\ $\nb{S'}{v_q}$) to $t$, there exist \(\ell\) Edge-RR-sets $R$ such that $|R\setminus S''| = 1/2$. The total increase of \(\shap_E\) on these RR-sets is at least \(\ell/6 = \ell/2 - \ell/3\). Conversely, the decrease resulting from setting $\nb{S'}{v_i} = 0$ for all \(i \in [r + 1, |U_{S'}|]\) (resp.\ for all \(i \in [r, |U_{S'}|]\)) is at most \(\ell|E|/(t+1)\), since for each of the Edge-RR-sets $R$ that intersect $\{v_{i}\}_{i\in [r + 1, |U_{S'}|]}$ (resp.\ $\{v_{i}\}_{i\in [r, |U_{S'}|]}$), it holds that $|R\setminus S'|\ge t$. As \(\ell/6 \ge \ell|E|/(t+1)\), this concludes the proof.
\end{proof}

\section{CONCLUSION AND FUTURE WORK}
We have formalized the problem of determining a set of $k$ nodes in a social network maximizing an influence-based Group Shapley centrality measure. Assuming common computational complexity conjectures, we have obtained strong hardness of approximation results for the problem at hand in this paper. For instance, this problem cannot be approximated within $1/n^{o(1)}$ under the Gap Exponential Time Hypothesis. On the other hand, we showed that a greedy algorithm achieves a factor of $\frac{1-1/e}{k}-\epsilon$ for any $\epsilon>0$, yielding an interesting result when $k$ is small.

Several directions for future work are conceivable. First, it would be worth investigating an algorithm with an approximation ratio which is sublinear in the number of nodes of the social network. Second, specific properties of the social network could allow more positive approximation results, as, for instance, the connectivity of the graph has a direct impact on the size of the generated reverse reachable sets. Hence, restricting this parameter could have an impact on the complexity of the problem from an approximation viewpoint. Third, it would be interesting to adapt our work to other generalized semivalues as, for instance, the Group Banzhaf value~\cite{DBLP:journals/dam/MarichalKF07}. Lastly, properly engineering and testing the approximation algorithm designed in this paper would be an interesting and complementary work.

\bibliographystyle{alpha}
\bibliography{biblio}

\newcommand{\etalchar}[1]{$^{#1}$}
\begin{thebibliography}{AvdGB{\etalchar{+}}19}

\bibitem[AT05]{AltmanT05}
Alon Altman and Moshe Tennenholtz.
\newblock Ranking systems: the pagerank axioms.
\newblock In {\em Proceedings 6th {ACM} Conference on Electronic Commerce
  (EC-2005)}, pages 1--8, 2005.

\bibitem[AvdGB{\etalchar{+}}19]{AngrimanGBZGM19}
Eugenio Angriman, Alexander van~der Grinten, Aleksandar Bojchevski, Daniel
  Z{\"{u}}gner, Stephan G{\"{u}}nnemann, and Henning Meyerhenke.
\newblock Group centrality maximization for large-scale graphs.
\newblock {\em CoRR}, abs/1910.13874, 2019.

\bibitem[Bav50]{B50}
Alex Bavelas.
\newblock Communication patterns in task-oriented groups.
\newblock {\em The Journal of the Acoustical Society of America},
  22(6):725--730, 1950.

\bibitem[BBC{\etalchar{+}}19]{BergaminiBCMM19}
Elisabetta Bergamini, Michele Borassi, Pierluigi Crescenzi, Andrea Marino, and
  Henning Meyerhenke.
\newblock Computing top-\emph{k} closeness centrality faster in unweighted
  graphs.
\newblock {\em {TKDD}}, 13(5):53:1--53:40, 2019.

\bibitem[BBCL14]{DBLP:conf/soda/BorgsBCL14}
Christian Borgs, Michael Brautbar, Jennifer~T. Chayes, and Brendan Lucier.
\newblock Maximizing social influence in nearly optimal time.
\newblock In {\em Proceedings of the 25th Annual {ACM-SIAM} Symposium on
  Discrete Algorithms, {SODA} 2014, Portland, Oregon, USA, January 5-7, 2014},
  pages 946--957, 2014.

\bibitem[BGM18]{BergaminiGM18}
Elisabetta Bergamini, Tanya Gonser, and Henning Meyerhenke.
\newblock Scaling up group closeness maximization.
\newblock In {\em Proceedings of the 20th Workshop on Algorithm Engineering and
  Experiments, {ALENEX} 2018, New Orleans, LA, USA, January 7-8, 2018}, pages
  209--222, 2018.

\bibitem[BP98]{BP98}
Sergey Brin and Lawrence Page.
\newblock The anatomy of a large-scale hypertextual web search engine.
\newblock {\em Computer Networks and ISDN Systems}, 30(1-7):107--117, 1998.

\bibitem[BV14]{BV14}
Paolo Boldi and Sebastiano Vigna.
\newblock Axioms for centrality.
\newblock {\em Internet Mathematics}, 10(3--4):222--262, 2014.

\bibitem[CEW11]{chalkiadakis2011computational}
Georgios Chalkiadakis, Edith Elkind, and Michael Wooldridge.
\newblock Computational aspects of cooperative game theory.
\newblock {\em Synthesis Lectures on Artificial Intelligence and Machine
  Learning}, 5(6):1--168, 2011.

\bibitem[CL06]{chung2006concentration}
Fan Chung and Linyuan Lu.
\newblock Concentration inequalities and martingale inequalities: a survey.
\newblock {\em Internet Mathematics}, 3(1):79--127, 2006.

\bibitem[CLC13]{ChenCL13}
Wei Chen, Laks V.~S. Lakshmanan, and Carlos Castillo.
\newblock {\em Information and Influence Propagation in Social Networks}.
\newblock Synthesis Lectures on Data Management. Morgan {\&} Claypool
  Publishers, 2013.

\bibitem[CT17]{chen2017interplay}
Wei Chen and Shang-Hua Teng.
\newblock Interplay between social influence and network centrality: a
  comparative study on shapley centrality and single-node-influence centrality.
\newblock In {\em Proceedings of the 26th international conference on world
  wide web (WWW)}, pages 967--976, 2017.

\bibitem[CTZ18]{ChenTZ18Arxiv}
Wei Chen, Shang{-}Hua Teng, and Hanrui Zhang.
\newblock A systematic framework and characterization of influence-based
  network centrality.
\newblock {\em CoRR}, abs/1810.09981, 2018.

\bibitem[CWW16]{ChenWW16}
Chen Chen, Wei Wang, and Xiaoyang Wang.
\newblock Efficient maximum closeness centrality group identification.
\newblock In {\em Databases Theory and Applications - 27th Australasian
  Database Conference, {ADC} 2016}, pages 43--55, 2016.

\bibitem[DR01]{DomingosR01}
Pedro Domingos and Matt Richardson.
\newblock Mining the network value of customers.
\newblock In {\em Proceedings of the 7th ACM SIGKDD International Conference on
  Knowledge Discovery and Data Mining}, KDD '01, pages 57--66. ACM, 2001.

\bibitem[EB99]{EveretB99}
Martin~G Everett and Stephen~P Borgatti.
\newblock The centrality of groups and classes.
\newblock {\em The Journal of mathematical sociology}, 23(3):181--201, 1999.

\bibitem[FMT19]{DBLP:journals/4or/FloresMT19}
Ram{\'{o}}n Flores, Elisenda Molina, and Juan Tejada.
\newblock Evaluating groups with the generalized shapley value.
\newblock {\em 4OR}, 17(2):141--172, 2019.

\bibitem[Fre77]{F77}
Linton~C. Freeman.
\newblock A set of measures of centrality based on betweenness.
\newblock {\em Sociometry}, 40(1):35--41, 1977.

\bibitem[GGAM{\etalchar{+}}03]{GomezGMOPT}
Daniel Gómez, Enrique González-Arangüena, Conrado Manuel, Guillermo Owen,
  Mónica del Pozo, and Juan Tejada.
\newblock Centrality and power in social networks: a game theoretic approach.
\newblock {\em Mathematical Social Sciences}, 46(1):27--54, 2003.

\bibitem[GJ02]{garey2002computers}
Michael~R Garey and David~S Johnson.
\newblock {\em Computers and intractability}, volume~29.
\newblock wh freeman New York, 2002.

\bibitem[IETB12]{IshakianETB12}
Vatche Ishakian, D{\'{o}}ra Erd{\"{o}}s, Evimaria Terzi, and Azer Bestavros.
\newblock A framework for the evaluation and management of network centrality.
\newblock In {\em Proceedings of the 12th {SIAM} International Conference on
  Data Mining (SDM)}, pages 427--438, 2012.

\bibitem[KKT15]{DBLP:journals/toc/KempeKT15}
David Kempe, Jon~M. Kleinberg, and {\'{E}}va Tardos.
\newblock Maximizing the spread of influence through a social network.
\newblock {\em Theory of Computing}, 11:105--147, 2015.

\bibitem[Leh88]{lehrer1988axiomatization}
Ehud Lehrer.
\newblock An axiomatization of the banzhaf value.
\newblock {\em International Journal of Game Theory}, 17(2):89--99, 1988.

\bibitem[Man17]{manurangsi2017almost}
Pasin Manurangsi.
\newblock Almost-polynomial ratio eth-hardness of approximating densest
  k-subgraph.
\newblock In {\em Proceedings of the 49th Annual {ACM} {SIGACT} Symposium on
  Theory of Computing, {STOC} 2017, Montreal, QC, Canada, June 19-23, 2017},
  pages 954--961, 2017.

\bibitem[MKF07]{DBLP:journals/dam/MarichalKF07}
Jean{-}Luc Marichal, Ivan Kojadinovic, and Katsushige Fujimoto.
\newblock Axiomatic characterizations of generalized values.
\newblock {\em Discrete Applied Mathematics}, 155(1):26--43, 2007.

\bibitem[MKS{\etalchar{+}}13]{MichalakASRJ13}
Tomasz~P. Michalak, Aadithya~V. Karthik, Piotr~L. Szczepanski, Balaraman
  Ravindran, and Nicholas~R. Jennings.
\newblock Efficient computation of the shapley value for game-theoretic network
  centrality.
\newblock {\em J. Artif. Intell. Res.}, 46:607--650, 2013.

\bibitem[MSS{\etalchar{+}}18]{MedyaSSBS18}
Sourav Medya, Arlei Silva, Ambuj~K. Singh, Prithwish Basu, and Ananthram Swami.
\newblock Group centrality maximization via network design.
\newblock In {\em Proceedings of the 18th {SIAM} International Conference on
  Data Mining (SDM)}, pages 126--134, 2018.

\bibitem[Mye13]{myerson2013game}
Roger~B Myerson.
\newblock {\em Game theory}.
\newblock Harvard university press, 2013.

\bibitem[New10]{N10}
Mark Newman.
\newblock {\em Networks: An Introduction}.
\newblock Oxford University Press, Inc., 2010.

\bibitem[RD02]{RichardsonD02}
Matthew Richardson and Pedro Domingos.
\newblock Mining knowledge-sharing sites for viral marketing.
\newblock In {\em Proceedings of the 8th ACM SIGKDD International Conference on
  Knowledge Discovery and Data Mining}, KDD '02, pages 61--70. ACM, 2002.

\bibitem[RS10]{raghavendra2010graph}
Prasad Raghavendra and David Steurer.
\newblock Graph expansion and the unique games conjecture.
\newblock In {\em Proceedings of the 42nd {ACM} Symposium on Theory of
  Computing, {STOC} 2010, Cambridge, Massachusetts, USA, 5-8 June 2010}, pages
  755--764, 2010.

\bibitem[RU18]{RiondatoU18}
Matteo Riondato and Eli Upfal.
\newblock {ABRA:} approximating betweenness centrality in static and dynamic
  graphs with rademacher averages.
\newblock {\em {TKDD}}, 12(5):61:1--61:38, 2018.

\bibitem[Sha53]{shapley1953value}
Lloyd~S Shapley.
\newblock A value for n-person games.
\newblock {\em Contributions to the Theory of Games}, 2(28):307--317, 1953.

\bibitem[SMR16]{SzczepanskiMR16}
Piotr~L. Szczepanski, Tomasz~P. Michalak, and Talal Rahwan.
\newblock Efficient algorithms for game-theoretic betweenness centrality.
\newblock {\em Artif. Intell.}, 231:39--63, 2016.

\bibitem[SMR18]{SkibskiMR18}
Oskar Skibski, Tomasz~P. Michalak, and Talal Rahwan.
\newblock Axiomatic characterization of game-theoretic centrality.
\newblock {\em J. Artif. Intell. Res.}, 62:33--68, 2018.

\bibitem[TMRW18]{TarkowskiMRW18Arxiv}
Mateusz~Krzysztof Tarkowski, Tomasz~P. Michalak, Talal Rahwan, and Michael~J.
  Wooldridge.
\newblock Game-theoretic network centrality: {A} review.
\newblock {\em CoRR}, abs/1801.00218, 2018.

\bibitem[TSM{\etalchar{+}}18]{TarkowskiSMHW18}
Mateusz~Krzysztof Tarkowski, Piotr~L. Szczepanski, Tomasz~P. Michalak, Paul
  Harrenstein, and Michael~J. Wooldridge.
\newblock Efficient computation of semivalues for game-theoretic network
  centrality.
\newblock {\em J. Artif. Intell. Res.}, 63:145--189, 2018.

\bibitem[TXS14]{DBLP:conf/sigmod/TangXS14}
Youze Tang, Xiaokui Xiao, and Yanchen Shi.
\newblock Influence maximization: near-optimal time complexity meets practical
  efficiency.
\newblock In {\em International Conference on Management of Data, {SIGMOD}
  2014, Snowbird, UT, USA, June 22-27, 2014}, pages 75--86, 2014.

\bibitem[ZLTG14]{ZhaoLTGX14}
Junzhou Zhao, John~C.S. Lui, Don Towsley, and Xiaohong Guan.
\newblock Measuring and maximizing group closeness centrality over
  disk-resident graphs.
\newblock In {\em Proceedings of the 23rd International Conference on World
  Wide Web}, WWW '14 Companion, pages 689--694, 2014.

\bibitem[ZWL{\etalchar{+}}17]{ZhaoWLTG17}
Junzhou Zhao, Pinghui Wang, John~C.S. Lui, Don Towsley, and Xiaohong Guan.
\newblock I/o-efficient calculation of h-group closeness centrality over
  disk-resident graphs.
\newblock {\em Information Sciences}, 400--401:105--128, 2017.

\end{thebibliography}

\end{document}